\documentclass[letterpaper]{article} 
\usepackage{aaai2026}  
\usepackage{times}  
\usepackage{helvet}  
\usepackage{courier}  
\usepackage[hyphens]{url}  
\usepackage{graphicx} 
\usepackage{natbib}  
\usepackage{caption} 
\usepackage{algorithm}
\usepackage{algorithmic}

\usepackage{newfloat}
\usepackage{listings}
\DeclareCaptionStyle{ruled}{labelfont=normalfont,labelsep=colon,strut=off} 
\floatstyle{ruled}
\newfloat{listing}{tb}{lst}{}
\floatname{listing}{Listing}
\usepackage{algorithm}
\usepackage{algorithmic}
\usepackage{amsthm}
\usepackage{amsmath}
\usepackage{amssymb}
\usepackage{multirow}
\usepackage{caption}
\usepackage{booktabs} 
\usepackage{siunitx} 
\usepackage{makecell}
\title{Sample-and-Search: An Effective Algorithm for Learning-Augmented  $k$-Median Clustering in High dimensions}
\author{
    Kangke Cheng\textsuperscript{\rm 1}, Shihong Song\textsuperscript{\rm 2}, Guanlin Mo\textsuperscript{\rm 1}, Hu Ding\textsuperscript{\rm 1}\thanks{Corresponding author}
}
\affiliations{
    \textsuperscript{\rm 1}University of Science and Technology of China, Hefei, China\\
    \textsuperscript{\rm 2}School of Informatics, University of Edinburgh, Edinburgh, UK\\
    ke314159@mail.ustc.edu.cn, S.Song-29@sms.ed.ac.uk, moguanlin@mail.ustc.edu.cn, huding@ustc.edu.cn
    
}

\usepackage{cleveref}
\usepackage{tablefootnote}
 \newtheorem{theorem}{Theorem}[section]
 \newtheorem{lemma}[theorem]{Lemma}
  \newtheorem{claim}[theorem]{Claim}
  \newtheorem{proposition}[theorem]{Proposition}

 \newtheorem{definition}{Definition}[section]
 
\begin{document}

\maketitle

\begin{abstract}
In this paper, we investigate the {\em learning-augmented $k$-median clustering} problem, which aims to improve the performance of traditional clustering algorithms by preprocessing the point set with a predictor of error rate $\alpha \in [0,1)$. This preprocessing step assigns potential labels to the points before clustering. We introduce an  algorithm for this problem based on a simple yet effective  sampling method, which substantially improves upon the time complexities of existing algorithms. Moreover, we   mitigate their exponential dependency on the dimensionality of the Euclidean space. 
Lastly, we conduct experiments to compare
 our method with several state-of-the-art learning-augmented $k$-median clustering methods. The experimental results suggest that our proposed approach can significantly reduce the computational complexity in practice, while achieving a lower clustering cost. 
\end{abstract}

\begin{links}
    \link{Code}{https://github.com/KangkeCheng/Learning-Augmented-k-Median-Sample-and-Search}
\end{links}

\section{Introduction}
As the core topics in unsupervised learning, {\em $k$-median} and {\em$k$-means} clusterings are widely applied to numerous  fields, like 
bioinformatics ~\cite{kiselev2017sc3}, 
computer vision ~\cite{caron2020unsupervised}, 
and social network ~\cite{ghaffari2021clustering}. The primary goal of these center-based clustering problems is to partition a set of unlabeled data points into multiple clusters, such that data points within the same cluster are similar to each other (under some  metric),
while data points in different clusters exhibit significant dissimilarity. 
\begin{table*}[ht]
	\centering
        \footnotesize
        \setlength{\tabcolsep}{1mm}

	\begin{tabular}{lccc}
		\toprule
		Methods  & Approximation Ratio &  Label Error Range   & Time Complexity\\
		\midrule
		\makecell[l]{\citet{Ergun2021LearningAugmentedKC}}& $1+O((k\alpha)^{1/4})$&$\tilde{O}(\frac{1}{k})$ & $O(nd\log^3n+\text{poly}(k, \log n))$\\
        \addlinespace
		\citet{Nguyen2022ImprovedLA}& $1+\frac{7\alpha+10\alpha^2-10\alpha^3}{(1-\alpha)(1-2\alpha)}$&$[0, 1/2)$& $O(\frac{1}{1-2\alpha}nd \log^3n\log^2\frac{k}{\delta})$ \\
        \addlinespace
        \citet{huang2025new} &$1+\frac{(6+\epsilon)\alpha-4\alpha^2}{(1-\alpha)(1-2\alpha)} $ &$[0, 1/2)$&$O(nd\log(kd)\log(n\Delta)\cdot(\frac{\sqrt{d}}{\alpha\epsilon})^{O(d)})$\\
        \addlinespace
        Sample-and-Search (ours) & $1+ \frac{(6+\epsilon)\alpha-4\alpha^2}{(1-\alpha)(1-2\alpha)}$ &$[0, 1/2)$& $O(2^{O(1/(\alpha\varepsilon)^4)} nd {\log \frac{k}{\delta}})$\\
		\bottomrule
	\end{tabular}
\caption{A comparison of our Sample-and-Search algorithm with state-of-the-art methods. Here,  the terms $\epsilon > 0$ and $\delta \in (0, 1)$ are the parameters that control the approximation precision and success probability. $\Delta$ denotes the aspect ratio of the given point set.}
\label{t2}
\end{table*}

$k$-means problem seeks to find $k$ centers that minimize the sum of squared Euclidean distances from each point to its nearest center. Formally, the goal is to minimize $\sum_{x \in X} \min_{c \in C} \|x - c\|^2_2$, where
$X$ is the input dataset and $C$ is the set of $k$ centers. Despite its popularity, $k$-means is known to be sensitive to outliers and noise, as the squared distance objective increases the impact of extreme values quadratically. In contrast, the $k$-median problem minimizes the sum of Euclidean distances: $\sum_{x \in X} \min_{c \in C} \|x - c\|_2$, which provides greater robustness to outliers and heavy-tailed distributions. Thus $k$-median clustering is preferable in many practical applications, especially when data is noisy. Therefore, our work focuses on the $k$-median setting, aiming to retain its robustness advantages while addressing the algorithmic challenges through the proposed framework.

\textbf{Learning-Augmented algorithms.} A central challenge in the field of algorithm design lies in simultaneously reducing algorithmic time complexity while maintaining a reliable approximation ratio. The proliferation of large-scale data and the advancement of machine learning bring the opportunity
 to obtain valuable prior knowledge for many classical algorithmic problems. To overcome the often pessimistic bounds of traditional worst-case analysis, the theoretical computer science community has introduced {\em learning-augmented} algorithms~\cite{Hsu2018LearningBasedFE,10.5555/3495724.3496389,10.5555/3540261.3541056,10.1145/3447579,10.1145/3528087}—a new paradigm that falls under the umbrella of ``Beyond Worst-Case Analysis"~\cite{Roughgarden_2021}. 
  The core idea is to design algorithms that can harness auxiliary information, typically from a machine-learned model, to enhance their performance. 

For learning-augmented $k$-means and $k$-median clustering, \citet{pmlr-v178-gamlath22a} explored noisy labels, achieving $(1+O(\epsilon))$-approximation for balanced adversarial noise and $O(1)$-approximation for stochastic noise models. Here, the approximation ratio is a measure of the solution's quality, defined as the ratio of the cost of the algorithm's solution to the cost of the optimal solution. \citet{Ergun2021LearningAugmentedKC} developed a learning-augmented framework where data points are augmented with predicted labels, quantified by an error rate $\alpha \in [0,1)$. They proposed a randomized algorithm for $k$-means problem that achieves a $(1+20\alpha)$-approximation under some specific constraints on $\alpha$ and cluster size in $O(nd\log n)$ time, where $n$ denotes the number of points to be clustered and $d$ denotes the dimension of the space. They also proposed an algorithm for the $k$-median problem that, under the condition that $\alpha=\tilde{O}\left(\frac{1}{k}\right)$, achieves an $\tilde{O}((k\alpha)^{1/4})$-approximation.

\citet{Nguyen2022ImprovedLA} further improved these algorithms. Their $k$-means algorithm directly estimates locally optimal centers dimension-wise across predicted clusters, achieving a $(1 + O(\alpha))$-approximation in $O(nd \log n)$ time when $\alpha \in [0, 1/2)$. Their k-median algorithm significantly improves the approximation guarantee by employing multiple random samplings and pruning techniques. 

More recently, \citet{huang2025new} extended the dimension-wise estimation method for learning-augmented $k$-means to reduce the time complexity while maintaining a similar approximation ratio by using sampling to avoid sorting. They also proposed a $k$-median algorithm that achieves a $1+\frac{(6+\epsilon)\alpha-4\alpha^2}{(1-\alpha)(1-2\alpha)}$-approximation,  which represents the state-of-the-art in terms of approximation ratio for learning-augmented $k$-median clustering, as far as we are aware. However, in their work, the structural differences between $k$-means and $k$-median lead to a fundamental algorithmic gap. For $k$-means, the mean center has a closed-form solution that can be computed independently across dimensions. In contrast, $k$-median centers lack closed-form expressions and cannot be decomposed dimension-wise, making them significantly harder to compute even when point label predictions are available.  As a result, their $k$-median method needs brute-force grid partitioning and searching procedure in the original high-dimentional space, thus introduces an exponential dependence on $d$, which is generally considered unacceptable in practice, particularly in high dimensional scenarios. Hence, a key open problem is: {\em Is it possible to design an algorithm that achieves the state-of-the-art approximation ratio while overcoming the  exponential dependence on the dimension $d$?}
\begin{figure}
    \centering
    \includegraphics[width=1\linewidth]{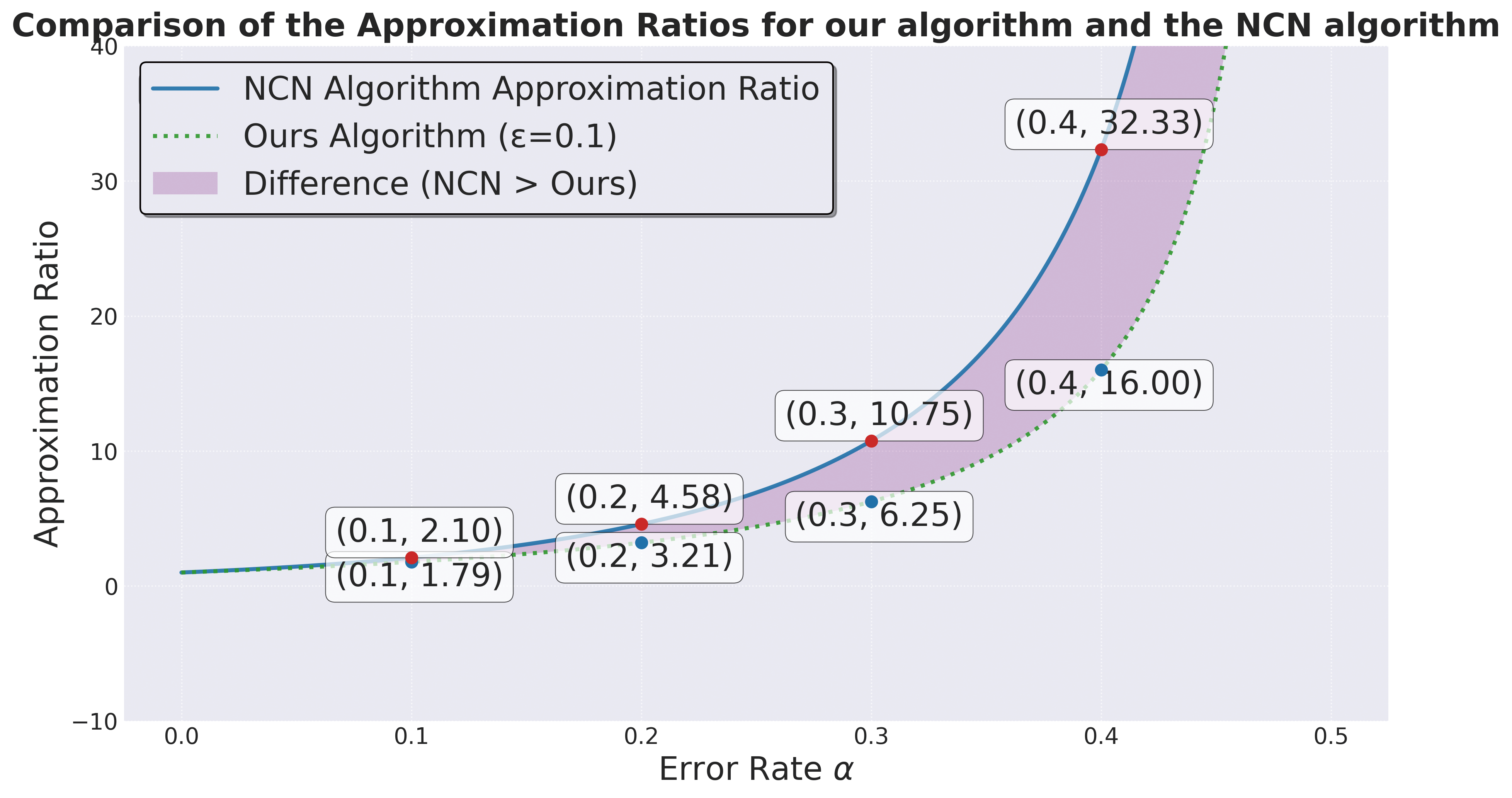}
    \caption{Comparison of the Approximation Ratios for our algorithm (set $\epsilon=0.1$) and the NCN algorithm in term of the change of error rate $\alpha$. This plot shows that our algorithm (green dashed line) consistently achieves a lower approximation ratio than the NCN algorithm~\cite{Nguyen2022ImprovedLA} (blue solid line) across all values of the error rate $\alpha \in [0,1/2)$. The purple shaded area highlights this performance gap, which becomes more pronounced as $\alpha$ increases.}
    \label{approvsalpha}
\end{figure}

\textbf{Our key ideas and main contributions.}
Our key insight is that for each predicted cluster, the true median of the correctly labeled subset lies close to a low-dimensional subspace spanned by a small random sample. This allows us to efficiently discretize the search space using a low-dimensional grid, thus reducing the computational cost.
The main technical challenge is that the predicted clusters may be noisy—some points are misclassified, and the predicted cluster center may be far away from the true one. To tackle this obstacle, we design a novel sampling-and-search framework that can effeictively select appropriate candidate cluster centers.  The key idea is to utilize a greedy search strategy in the aforementioned low-dimensional grid, which neatly avoids to explicitly distinguish between the correclty labeled and mislabeled points.

Our contributions are summarized as follows:
We propose a simple yet effective algorithm for learning-augmented $k$-median clustering. The time complexity is \textbf{linear in $n$ and $d$}, avoiding exponential dependence on the dimension $d$.
At the same time, our algorithm achieves an approximation ratio of $1+ \frac{(6+\epsilon)\alpha-4\alpha^2}{(1-\alpha)(1-2\alpha)}$ for $\alpha <\frac{1}{2}$, matching the state-of-the-art.
Furthermore, we conduct a set of experiments on high-dimensional datasets, demonstrating speedups (up to $10\times$) over prior methods while maintaining relatively high clustering quality. Figure~\ref{approvsalpha} compares the approximation ratio of their algorithm (denoted as NCN) with ours. \Cref{t2} provides a detailed comparison for the results of our and existing methods. 

\subsection{Preliminaries}
\textbf{Notations.} Let $X$ denote the input set of $n$ points  in   $\mathbb{R}^d$. For any two points $p, q \in \mathbb{R}^d$, their Euclidean distance is $\|p-q\|_2$.

Given any point set  $C$,  the distance from a point $p$ to its closest point in $C$ is denoted as $\mathtt{dist}(p, C) = \min_{c \in C} \|p-c\|_2$. In particular, when $C$ is the given set of centers for the point set $X$, the corresponding cost, denoted as $\mathtt{Cost}(X,C)$, is defined as $\mathtt{Cost}(X, C) = \sum_{x \in X} \mathtt{dist}(p, C).\label{for-kmediancost}$ With a slight abuse of notation, we also use $\mathtt{Cost}(P,c)$ to denote the sum of distances from a point set $P$ to a single point $c$.

We denote the optimal $k$ clusters for the given instance $X$ as $\{X_1^*, \ldots, X_k^*\}$, and the set $C^* = \{c_1^*, \ldots, c_k^*\}$ contains their
corresponding optimal centers.

For any point set $P$ and we use $\mathtt{Med}(P)$ to denote its \textbf{median point}, i.e.,
\begin{eqnarray}
\mathtt{Med}(P)=\arg\min_{q\in \mathbb{R}^d}\sum_{p\in P}\|p-q\|_2.
\end{eqnarray}
Therefore, for each $1\leq i\leq k$, $c_i^*=\mathtt{Med}(X_i^*)$. 

\begin{definition}[\textbf{learning-augmented $k$-median clustering}]  
\label{def-la}

Suppose there exists a predictor that outputs a labeled partition $\{\tilde{X}_1, \tilde{X}_2, \dots, \tilde{X}_k\}$ for $X$, parameterized by a label error rate $\alpha\in [0, 1)$, which satisfies:  
\begin{align*}  
|\tilde{X}_i \cap X_i^*| \geqslant (1 - \alpha) \max( |\tilde{X}_i|, |X_i^*|)  
\end{align*}  
where $|\cdot|$ denotes the number of points in a set. The goal of learning-augmented $k$-median clustering is using such a partially correct result to compute 
a center set $C \subset \mathbb{R}^d$ that minimizes $\mathtt{Cost}(P, C)$. 
\end{definition}

We also introduce two important propositions on geometric median point in Euclidean space, which
are essential for our following proofs. 

\begin{figure}
    \centering
    \includegraphics[width=1\linewidth]{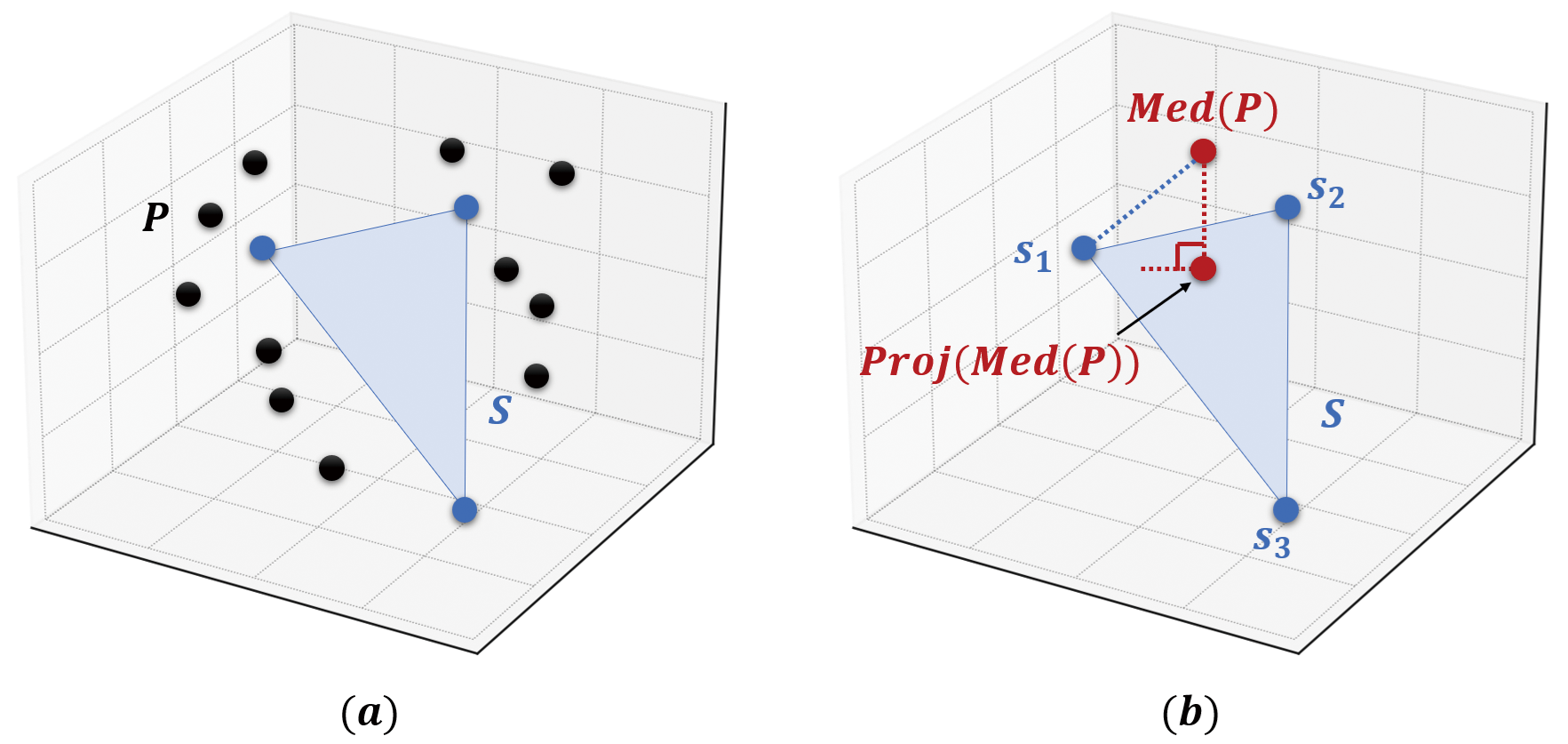} 
    \caption{(a) provides a simplified illustration of how a subspace is generated. We sample a subset $S$ (denoted by the blue points) from the original point set $P$ (denoted by the black points), and $S$ forms a subspace $\mathtt{span}(S)$. (b) shows that $\mathtt{span}(S)$ contains a projection of $\mathtt{Med}(P)$, denoted by $\mathtt{Proj}(\mathtt{Med}(P))$, which is close to  $\mathtt{Med}(P)$. Moreover, $S$ contains a point (\emph{e.g.,} $s_1$) that is within a bounded distance from $\mathtt{Med}(P))$.}
    \label{proposition1}
\end{figure}

\begin{proposition}\cite{10.1145/509907.509947}
\label{kumar1}
Let $P$ be a point set in $\mathbb{R}^d$. Given two parameters $1 > \varepsilon > 0$ and $\gamma > 1$, we draw a random sample $S$ from $P$ of size $\frac{\gamma}{\varepsilon^3} \log \frac{1}{\varepsilon}$. Then, with the probability at least $1-1/\gamma$, the following two events occur:
\textbf{(i)} The flat $\mathtt{span}(S)$ contains a point within a distance of $\frac{\epsilon\cdot\mathtt{Cost}(P,\mathtt{Med}(P))}{|P|}$ from $\mathtt{Med}(P)$, where $\mathtt{span}(S)$ denotes the subspace spanned by the points in $S$.
\textbf{(ii)} The set $S$ contains a point within a distance of $2 \times\frac{\mathtt{Cost}(P, \mathtt{Med}(P))}{|P|}$ from $\mathtt{Med}(P)$.
\end{proposition}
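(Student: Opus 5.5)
Write $c=\mathtt{Med}(P)$, $r=\mathtt{Cost}(P,c)/|P|$ for the average distance, and $S=\{s_1,\dots,s_m\}$ for the sample, drawn i.i.d. uniformly from $P$. The two events are handled separately and combined with a union bound at the end; the probability budget $1/\gamma$ is split between them.

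\textbf{Event (ii).} By Markov's inequality, at most half the points of $P$ lie farther than $2r$ from $c$, so each sample lands in the ball $B(c,2r)$ with probability at least $1/2$. The probability that none of the $m=\frac{\gamma}{\varepsilon^3}\log\frac1\varepsilon$ samples does so is at most $2^{-m}$, which is far below $1/(2\gamma)$ for $\varepsilon<1$ and $\gamma>1$.

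\textbf{Event (i).} This is the core of the argument, and I would follow the iterative-projection scheme of Bădoiu--Har-Peled--Indyk. Process the samples one at a time and maintain the flat $F_j=\mathtt{span}(s_1,\dots,s_j)$ together with the point $p_j$, the projection of $c$ onto $F_j$. Let $\ell_j=\|c-p_j\|$. As long as $\ell_j>\varepsilon r$, call a sample \emph{good} if
\[
\|s-c\|\le 2r/\varepsilon' \quad\text{and}\quad s \text{ lies in a cone of half-angle } \tfrac{\pi}{2}-\Theta(\varepsilon) \text{ around the direction } c-p_j.
\]
The key claim is that a good sample has constant-type probability $\Omega(\varepsilon)$. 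The reason is optimality of $c$: the sum of unit vectors $\sum_{x\in P}(x-c)/\|x-c\|$ vanishes, so the points cannot all avoid the halfspace in the direction $c-p_j$. A quantitative version shows that a positive fraction of points make angle at most $\pi/2-\varepsilon$ with $c-p_j$. This step needs care. The subgradient condition only balances directions, not masses, so I would instead use the first-order expansion of $\mathtt{Cost}(P,c+t u)$ in the unit direction $u$ to argue that moving $c$ toward $p_j$ cannot decrease cost. Combined with Markov's inequality on distances, this yields an $\Omega(\varepsilon)$ fraction of points that are both near and angularly aligned.

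Once a good sample is added, a planar calculation in the 2-dimensional plane through $p_j$, $c$ and $s$ shows that the new distance from $c$ to the flat satisfies $\ell_{j+1}\le(1-\Theta(\varepsilon^2))\ell_j$, or else it drops directly below $\varepsilon r$. The initial distance is $\ell_0\le\|c-s_1\|=O(r/\varepsilon)$ with good probability, reusing the argument of (ii). Hence $O(\varepsilon^{-2}\log\frac1\varepsilon)$ good samples suffice to reach $\ell\le\varepsilon r$. Since each sample is good with probability $\Omega(\varepsilon)$, the expected number of samples needed is $O(\varepsilon^{-3}\log\frac1\varepsilon)$. Markov's inequality on this waiting time then gives success within $\frac{\gamma}{\varepsilon^3}\log\frac1\varepsilon$ samples with probability at least $1-1/(2\gamma)$, after adjusting constants. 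Because $F_j$ only grows, the final $\mathtt{span}(S)$ contains $p_j$ with $\|p_j-c\|\le\varepsilon r$, which is exactly event (i).

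\textbf{Main obstacle.} The delicate step is the geometric progress lemma, namely that good samples make up an $\Omega(\varepsilon)$ fraction and that each one shrinks $\ell$ by a factor $1-\Omega(\varepsilon^2)$. This is where the median's optimality must be turned into an angular mass bound. If it proves hard to redo, I would simply cite the corresponding lemma of \cite{10.1145/509907.509947}, which this proposition restates.
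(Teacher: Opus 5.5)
The paper gives no proof of this proposition; it is imported from \cite{10.1145/509907.509947}. Your iterative-projection outline follows that source's scheme, so the skeleton (Markov for (ii), the planar progress step, a geometric waiting time) is the right one. The gap is exactly the step you flag, and the justification you offer for it cannot work. Optimality of $c$ puts no mass into a cone with apex $c$ around $u=(c-p_j)/\ell_j$. This holds whether you write it as $\sum_x (x-c)/\|x-c\|=0$ or as ``moving $c$ toward $p_j$ cannot decrease cost'', which is the same first-order condition. The condition is equally satisfied when almost all points lie on the hyperplane through $c$ orthogonal to $u$. Concretely, let $P$ consist of $s_1=c-u$, the point $c+u$, and $m$ copies each of $c\pm w$, with $w\perp u$ and $\|u\|=\|w\|=1$. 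Then $c=\mathtt{Med}(P)$ and $r=1$. If $s_1$ is drawn first, then $p_1=s_1$ (for the affine span, or with the origin placed so that $s_1\perp u$), so $\ell_1=1>\varepsilon r$. Yet only one of the $2m+2$ points makes angle at most $\pi/2-\varepsilon$ with $u$ at $c$. Moreover, your mass argument never uses $\ell_j>\varepsilon r$, so it cannot be repaired merely by moving the apex to $p_j$. Replacing $s_1$ by $c-\ell u$ with $\ell\ll\varepsilon$ leaves $c$ and (essentially) $r$ unchanged. But the points $c\pm w$ then subtend angle about $\pi/2-\ell$ at $p_1$, and only a $1/(2m+2)$ fraction of samples is good.

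The missing idea is to measure angles at $p_j$ and to split into cases using $\ell_j>\varepsilon r$. Call $s$ good if $(s-p_j)\cdot u\ge\beta\|s-p_j\|$ with $\beta=\Theta(\varepsilon)$. Since the new flat contains the line through $p_j$ and $s$, this gives $\ell_{j+1}\le\sqrt{1-\beta^2}\,\ell_j$ directly. Let $n=|P|$.

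(a) Every $x$ with $\|x-c\|\le 2r$ and $(x-c)\cdot u\ge-\ell_j/2$ satisfies $(x-p_j)\cdot u/\|x-p_j\|\ge(\ell_j/2)/(2r+\ell_j)\ge\varepsilon/6$. So if at least $n/4$ points are of this kind, you are done; the points $c\pm w$ above are of this kind.

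(b) Otherwise, since more than $n/2$ points lie within $2r$ of $c$, more than $n/4$ of them have $(x-c)\cdot u<-\ell_j/2$. Each of these contributes less than $-\ell_j/(4r)<-\varepsilon/4$ to $\sum_{x\neq c}(x-c)\cdot u/\|x-c\|$. Only here does optimality enter: this sum is at least $-m_0$, where $m_0$ is the number of points of $P$ equal to $c$. That forces at least $\varepsilon n/32$ points to equal $c$ or to satisfy $(x-c)\cdot u\ge(\varepsilon/32)\|x-c\|$, and all of these are good with $\beta=\varepsilon/(32\sqrt{2})$.

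A lesser issue concerns constants. Markov on the waiting time gives a sample size $C\gamma\varepsilon^{-3}\log(1/\varepsilon)$ for an absolute constant $C$, not the exact size stated. Also, your bound $2^{-m}\le 1/(2\gamma)$ fails for $\varepsilon$ near $1$, where $m<1$. The constants in the statement are a loose restatement of an $O(\cdot)$ bound, so claim the result only up to constants.
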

 Proposition~\ref{kumar1} shows that the subspace spanned by a sufficient random sample from $P$ is guaranteed to contain a good approximation of the $\mathtt{Med}(P)$, which enables us to construct a candidate set of centers by partitioning this subspace with a grid to approximate the optimal center. In ~\Cref{proposition1}, we depicts the generation of the subspace from the samples. Proposition~\ref{kumar2} is used to estimate the average cost of the clusters, which in turn guides the design of the grid cell side-length.
\begin{proposition}\cite{10.1145/1667053.1667054} 
\label{kumar2}
Let $P$ be a point set in $\mathbb{R}^d$. Given a parameter $\zeta \in (0, 1/12)$, we randomly sample a point $p_0$ and a set $S$ of size $1/\zeta$ from $P$. Define the value $v = \mathtt{Cost}(S, p_0)$. Then with the probability $>\frac{(1-\zeta^2)^{1/\zeta+1}}{2}$, we have:
$$\frac{v\zeta^3}{2} \leqslant \frac{\mathtt{Cost}(P, \textup{Med}(P))}{|P|} \leqslant \frac{v}{\zeta}. $$
\end{proposition}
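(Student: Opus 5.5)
The plan is to write $c=\mathtt{Med}(P)$ and $r=\mathtt{Cost}(P,c)/|P|$, and to prove the two inequalities separately. The lower inequality $v\zeta^3/2\le r$ will come from a Markov/union-type "all samples are good" event. The upper inequality $r\le v/\zeta$ will come from an anti-concentration event stating that the sample is not entirely clumped around $p_0$. The product $(1-\zeta^2)^{1/\zeta+1}$ in the claimed probability suggests that the first event is handled by independence over the $1/\zeta+1$ sampled points (namely $p_0$ and the $1/\zeta$ points of $S$). The factor $1/2$ should then account for the second event, conditioned on the first.

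\textbf{Step 1 (upper bound on $v$).} Since $\frac{1}{|P|}\sum_{p\in P}\|p-c\|=r$, Markov's inequality gives that a uniformly random point $p$ satisfies $\|p-c\|\le r/\zeta^2$ with probability at least $1-\zeta^2$. The $1/\zeta+1$ draws are independent, so with probability at least $(1-\zeta^2)^{1/\zeta+1}$ all of them lie in the ball $B(c,r/\zeta^2)$. On this event the triangle inequality gives $\|s-p_0\|\le 2r/\zeta^2$ for every $s\in S$. Hence
\[
v=\mathtt{Cost}(S,p_0)\le \frac{1}{\zeta}\cdot\frac{2r}{\zeta^2},
\]
which is exactly $v\zeta^3/2\le r$.

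\textbf{Step 2 (lower bound on $v$).} Here we need $v\ge \zeta r$. For every fixed $p_0$, optimality of the median gives
\[
\mathbb{E}_S[v]=\frac{1}{\zeta}\cdot\frac{\mathtt{Cost}(P,p_0)}{|P|}\ge \frac{r}{\zeta},
\]
so the expectation has slack $1/\zeta^2$ over the target $\zeta r$. To turn this into a probability bound, I would argue directly about the bad event. If $v<\zeta r$, then every $s\in S$ lies in $B(p_0,\zeta r)$. Let $m(p_0)$ be the fraction of $P$ inside $B(p_0,\zeta r)$. The bad event then has probability $m(p_0)^{1/\zeta}$ given $p_0$, and this is tiny unless $m(p_0)$ is close to $1$. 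If $m(p_0)\ge 1-\eta$ for small $\eta$, then the heavy ball around $p_0$ must also capture the median cost structure. I would compare $\mathtt{Cost}(P,p_0)$ with $\mathtt{Cost}(P,q)$ for $q$ in that ball, and use that moving the center by at most $\zeta r$ changes the cost by at most $\zeta r|P|$. The aim is to show that such a $p_0$ occurs with probability at most $1/2$ conditioned on the Step 1 event, so that the two events together give the stated $\frac{(1-\zeta^2)^{1/\zeta+1}}{2}$.

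\textbf{Main obstacle.} Step 2 is the hard part, and the constraint $\zeta<1/12$ is presumably used here. Concentration alone fails, because half the mass can sit in a tiny ball while the cost $r$ is carried by the far half. So the argument must use that with $1/\zeta$ independent samples, the probability that none of them lands in the far (cost-carrying) part is at most $(1-\mu)^{1/\zeta}$, where $\mu$ is the far mass fraction. One also has to lower-bound $\mu$ times the typical far distance in terms of $r$. My first attempt would be a dyadic decomposition of the distances $\|p-p_0\|$. I would show that some scale $2^j\zeta r$ carries mass at least on the order of $\zeta$, which forces at least one sample at distance $\ge\zeta r$ with constant probability. I would then tune the constants to the threshold $\zeta<1/12$. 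If this becomes too lossy, I would fall back on the original argument in the cited work.
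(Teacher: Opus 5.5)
Your Step 1 is correct, and it is very likely where the factor $(1-\zeta^2)^{1/\zeta+1}$ comes from. Note that the paper gives no proof of this proposition; it only cites Kumar--Sabharwal--Sen. The gap is Step 2, and better technique will not close it: the inequality $r\le v/\zeta$ fails with high probability on simple inputs, so the proposition as stated is false. Let $P$ consist of $n-1$ copies of the origin and one point $x$ with $\|x\|_2=D$. For distinct points, place the $n-1$ points in a ball of radius $\zeta^2D/(4n)$ around the origin; the same computation goes through. For $n\ge 3$ we have $\mathtt{Cost}(P,q)\ge\|q\|_2+\|q-x\|_2\ge D=\mathtt{Cost}(P,0)$, so $\mathtt{Med}(P)=0$ and $r=D/n>0$. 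The point $x$ is among the $1/\zeta+1$ draws with probability at most $(1/\zeta+1)/n$, and otherwise $v=0<\zeta r$. Hence the two-sided bound holds with probability at most $(1/\zeta+1)/n$. This is below $\frac{(1-\zeta^2)^{1/\zeta+1}}{2}>0.45$ once $n\ge 3(1/\zeta+1)$.

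This example breaks each piece of your plan. With probability $1-1/n$ the sampled $p_0$ has $m(p_0)=1-1/n$, so the heavy-ball case is the typical case, not one of probability at most $1/2$. All the mass outside $B(p_0,\zeta r)$ is a single point, so no dyadic scale carries mass of order $\zeta$. The cost can sit on a $1/n$ fraction of $P$, not just on a ``far half''. Tuning constants, using $\zeta<1/12$, or falling back on the cited source cannot rescue the statement in this form.

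What your cost-comparison idea does prove is a disjunction, and a correct version has to take that form. Suppose at least half of $P$ lies in $B(p_0,\rho)$, and write $c=\mathtt{Med}(P)$. Then $\|p-p_0\|_2-\|p-c\|_2\le 2\rho-\|c-p_0\|_2$ for those points and $\le\|c-p_0\|_2$ for the rest. Summing gives $\mathtt{Cost}(P,p_0)\le\mathtt{Cost}(P,c)+2\rho|P|$.

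Now take $\rho=\zeta r$. The event $v<\zeta r$ forces $S\subseteq B(p_0,\zeta r)$. This has probability at most $2^{-1/\zeta}$ unless $m(p_0)\ge 1/2$, and in that case $p_0$ is a $(1+2\zeta)$-approximate median. So with probability at least $(1-\zeta^2)^{1/\zeta+1}-2^{-1/\zeta}$, your Step 1 event holds and either $v\ge\zeta r$ or $p_0$ is itself a near-optimal center. Any use of the estimate must handle this second alternative separately. One example is the guess of $t$ in the proof of Lemma~\ref{lmcandidateset}, where $v=0$ even makes $\log_2 a$ undefined.
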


\textbf{Upper bound on error rate $\alpha$.} 
We consider the label error rate $\alpha$ to be upper-bounded by $1/2$. As discussed by \citet{Nguyen2022ImprovedLA}, when $\alpha$ reaches $1/2$, the relationship between the predicted and optimal clusters can break down entirely. 

\subsection{Other related work}
\textbf{$k$-median algorithms.} Due to the NP-Hardness of the $k$-median  problem \cite{8948668}, its approximate algorithms have been extensively studied over the past half-century. Although several PTAS (Polynomial-Time Approximation Scheme) algorithms have been proposed, their running time is exponential in either the dimension $d$ or the number of clusters $k$~\cite{10.1145/276698.276718,10.1145/3477541,10.1145/1667053.1667054}, making them impractical for many settings.  
Other algorithms include the $(3+\epsilon)$-approximation algorithm via local search proposed by \citet{10.1145/380752.380755} and the $3.25$-approximation using the LP-rounding approach proposed by \citet{10.1007/978-3-642-31594-7_17}. 

\textbf{$k$-means alogrithms.}
 Similarly with the $k$-median problem, existing PTAS algorithms for the $k$-means problem exhibit an exponential dependence on either $d$ or $k$ ~\cite{10.1145/3477541,10.1145/1667053.1667054}. Simultaneously, the widely used $k$-means++ algorithm has a time complexity of $O(ndk)$ and achieves an approximation ratio of $O(\log k)$ \cite{10.5555/1283383.1283494}, or $O(1)$ for well-separated data \cite{10.1145/2395116.2395117}. 

\section{Our Algorithm And Theoretical Analysis}
In this section, we propose the ``sample-and-search'' algorithm for learning-augmented $k$-median clustering. Our main idea is to 
extract information from predicted labels through uniform sampling, then leverage the properties of median point (based on Proposition~\ref{kumar1}) to construct a candidate center set in a low-dimensional subspace, and finally employ a greedy search approach to find the desired solution from the candidate center set. This sample-and-search strategy avoids 
searching in the original space which may be much larger than the subspace derived from Proposition~\ref{kumar1}, and thereby is able to reduce the total computational complexity to a great extent. In ~\Cref{sec-algandthe}, we introduce the detailed algorithm and our main theoretical result, i.e., \Cref{alg1}. Then, we provide the proof for ~\Cref{alg1} in ~\Cref{sec-proof}.

\subsection{Our Proposed Algorithm And Main Theorem}
\label{sec-algandthe}
We present the Sample-and-Search algorithm in  \Cref{algorithm 1}. In general, the algorithm consists of three main stages:

\begin{enumerate}
    \item \textbf{Sampling-Based Subspace Construction:} For each predicted cluster, we sample a small subset of points to form a \textbf{“basis”} that captures a “neighbor” within a bounded distance from the optimal cluster center. Note that once we have found such a basis, we only need to search within the low-dimensional subspace spanned by this small basis to approximate the optimal center of the cluster. This allows the size of our search space to depend only on $\epsilon$, not on the dimension $d$.
    \item \textbf{Grid-based Candidate Generation:} After generating $k$ appropriate subspaces where each one of them is sufficiently close to the corresponding optimal centers, we construct a grid structure in each of the subspaces to generate $k$ small candidate sets of center points. This eliminates the need to search the original high-dimensional space.
    \item \textbf{Greedy Center Selection:} We select the best center from the candidate set using a cost-minimization greedy selection procedure. 
\end{enumerate}

\begin{algorithm}[htbp]
\caption{\sc{Sample-and-Search for learning-augmented $k$-Median}}
\label{algorithm 1}
\begin{algorithmic}[1]
    \STATE \textbf{Input:}
    A $k$-median instance, consisting of a finite set of points $X \subset \mathbb{R}^d$ and an integer $k$;
    a predicted partition $\{\tilde{X_1}, \dots,\tilde{X_k}\}$ of $X$ with error rate $\alpha \in [0,1/2)$;
    an accuracy parameter $\epsilon \in (0, 1)$ and failure probability $\delta \in (0,1)$.
    \STATE \textbf{Output:}A set $\hat{C} \in {\mathbb{R}^d} $ of centers with $|\hat{C}| = k$ 

    \STATE  Initialize $\hat{C} \leftarrow \emptyset$, $\zeta \leftarrow \frac{1}{13}$
    \FOR{$i \leftarrow 1$ \textbf{to} $k$}
         \STATE  Initialize a candidate set ${C}_i \leftarrow \emptyset$ 
        \FOR{$j \leftarrow 1$ \textbf{to} $\left\lceil\frac{\log(\delta/k)}{\log(0.975)}\right\rceil$}
            \STATE  \textbf{Samplings:} first, randomly sample a point $y^j_i \in \tilde{X_i}$, and then sample two separated sets from $\tilde{X_i}$ uniformly at random: 
            a set $Q^j_i \subseteq \tilde{X_i}$ of size $\lceil\frac{2}{(1-\alpha)\zeta}\rceil$, and a  set ${R}^j_i \subseteq \tilde{X_i}$ of size $\lceil\frac{4\log(1/(\frac{\alpha\epsilon}{2}))}{(1-\alpha)(\frac{\alpha\epsilon}{2})^3}\rceil$
            \FOR{\textbf{each} subset $Q \subseteq {Q}_i^j$ of size $1/\zeta$} 
                \STATE  $v \leftarrow \mathtt{Cost}({Q}, y_i^j)$
                \STATE  $a \leftarrow \frac{v \zeta^3}{2}$ and $b \leftarrow \frac{v}{\zeta}$
                \FOR{\textbf{each integer} $l \in \{\lfloor\log_2 a\rfloor, \dots, \lceil\log_2 b\rceil\}$}
                    \STATE  $t \leftarrow 2^l$
                    \STATE  Run ~\Cref{candidate_set_construction_alg}: ${S} \leftarrow \textsc{CSC}({R}_i^j, t, \alpha, \epsilon)$ 
                    \STATE  ${C}_i \leftarrow {C}_i \cup {S}$
                \ENDFOR
            \ENDFOR
        \ENDFOR
        \STATE  For each $c \in {C}_i$, define ${N}_i(c)$ as the set of $\lceil(1-\alpha)|\tilde{X_i}|\rceil$ points in $\tilde{X_i}$ closest to $c$
        \STATE \textbf{Greedy selection:} find the best candidate for the $i$-th cluster center, $\hat{c}_i \leftarrow \arg\min_{c \in {C}_i} \sum_{x \in {N}_i(c)} \|x-c\|_2$ 
        \STATE $\hat{C} \leftarrow \hat{C} \cup \{\hat{c}_i\}$
    \ENDFOR
    \STATE \textbf{return} $\hat{C}$
\end{algorithmic}
\end{algorithm}
\begin{algorithm}[htbp]
\caption{\sc{Candidate Set Construction (CSC)}}
\label{candidate_set_construction_alg}
\begin{algorithmic}[1]
    \STATE \textbf{Input:}
    A point set ${R} \subset \mathbb{R}^d$;
    an approximate average cost ${t} > 0$;
    parameters {\color{red}}$\alpha $ and $\epsilon$.
    \STATE \textbf{Output:} A set of candidate centers ${S} \subset \mathbb{R}^d$.
    
    \STATE Initialize ${S} \leftarrow \emptyset,\theta \leftarrow \frac{\alpha \epsilon {t}}{4|{R}|}$ 
    \FOR{\textbf{each} point $r \in {R}$}
        \STATE Initialize a candidate set ${S}_r \leftarrow \emptyset$
        \STATE {Construct a grid ${G}_r$ on $\mathtt{span}({R})$ centered at $r$ with side-length $\theta$}
        \STATE {Define a ball ${B}(r, 2{t}) \leftarrow \{x \in \mathbb{R}^d \mid \|x-r\|_2 \le 2{t}\}$}
        \STATE ${S}_r\leftarrow {G}_r\cap {B}(r, 2{t})$ 
        \STATE ${S} \leftarrow {S} \cup {S}_r$
    \ENDFOR
    \STATE \textbf{return} ${S}$
\end{algorithmic}
\end{algorithm}

We present the main theoretical result of our algorithm below. 
\begin{theorem}
\label{alg1}
   Let $1>\epsilon>0$ and $1>\delta>0$ be two parameters. Given an instance $X$ as described in ~\Cref{def-la}, if we assume the error rate  $\alpha < \frac{1}{2}$, then ~\Cref{algorithm 1} can output 
   a solution with the approximation ratio
$\left(1+ \frac{6\alpha-4\alpha^2+\epsilon\alpha}{(1-\alpha)(1-2\alpha)}\right)$
 with probability $1-\delta$. The time complexity is    $O(2^{O(1/(\alpha\varepsilon)^4)} nd {\log \frac{k}{\delta}})$. 
\end{theorem}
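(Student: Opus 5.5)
The proof works cluster by cluster. Fix $i$ and write $P_i=\tilde X_i\cap X_i^*$, so that $|P_i|\ge(1-\alpha)\max(|\tilde X_i|,|X_i^*|)$. The plan has three steps. First, with constant probability a single iteration $j$ produces a candidate $c\in C_i$ with $\|c-\mathtt{Med}(P_i)\|$ small relative to $\mathtt{Cost}(P_i,\mathtt{Med}(P_i))/|P_i|$. Second, the greedy rule loses only a controlled factor relative to the cost of $X_i^*$ at $c_i^*$. Third, we sum over $i$ and take a union bound.

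\textbf{Step 1 (sampling succeeds).} Since $|P_i|\ge(1-\alpha)|\tilde X_i|$, a uniform sample from $\tilde X_i$ of size $m/(1-\alpha)$ contains, in expectation, $m$ points of $P_i$. By Markov/Chernoff, with constant probability it contains at least about half that many, and conditioned on landing in $P_i$ those points are uniform in $P_i$. This justifies the factor $2/(1-\alpha)$ in $|Q_i^j|$ and $4/(1-\alpha)$ in $|R_i^j|$.

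Enumerating all subsets $Q\subseteq Q_i^j$ of size $1/\zeta$ therefore includes one lying entirely in $P_i$. With constant probability $y_i^j\in P_i$ as well. Proposition~\ref{kumar2} with $\zeta=1/13$ then gives $v\zeta^3/2\le \mathtt{Cost}(P_i,\mathtt{Med}(P_i))/|P_i|\le v/\zeta$, so some power of two $t$ in the enumerated range satisfies $t\in[\mu_i,2\mu_i]$, where $\mu_i$ is this average cost.

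Similarly, $R_i^j$ contains a uniform subsample of $P_i$ of the size required by Proposition~\ref{kumar1} with accuracy $\epsilon'=\alpha\epsilon/2$ and $\gamma=2$. Consequently, $\mathtt{span}(R_i^j)$ contains a point $p$ with $\|p-\mathtt{Med}(P_i)\|\le \epsilon'\mu_i$, and some $r\in R_i^j$ satisfies $\|r-\mathtt{Med}(P_i)\|\le 2\mu_i\le 2t$. Hence $p$ lies in (or near) $B(r,2t)$. The grid $G_r$ with side $\theta=\alpha\epsilon t/(4|R|)$ in the at most $|R|$-dimensional span has a vertex within $\theta\sqrt{|R|}\le \alpha\epsilon t/4$ of $p$. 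That vertex $c$ satisfies
\[
\|c-\mathtt{Med}(P_i)\|\le \alpha\epsilon\mu_i.
\]
Multiplying the constant probabilities of all these events gives a per-iteration success probability of at least $0.025$. After $\lceil\log(\delta/k)/\log 0.975\rceil$ iterations, cluster $i$ fails with probability at most $\delta/k$, and a union bound over the $k$ clusters yields overall success probability $1-\delta$.

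\textbf{Step 2 (greedy selection; the main obstacle).} Let $\hat c_i$ minimize $\mathtt{Cost}(N_i(c),c)$. Because $N_i(c)$ consists of the nearest $\lceil(1-\alpha)|\tilde X_i|\rceil$ points and $|P_i|\ge$ that number,
\[
\mathtt{Cost}(N_i(\hat c_i),\hat c_i)\le \mathtt{Cost}(N_i(c),c)\le \mathtt{Cost}(P_i,c)\le (1+\alpha\epsilon)\,\mathtt{Cost}(P_i,\mathtt{Med}(P_i))\le(1+\alpha\epsilon)\,\mathtt{Cost}(P_i,c_i^*).
\]
We then bound $\mathtt{Cost}(X_i^*,\hat c_i)$. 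For $x\in X_i^*$ we use $\|x-\hat c_i\|\le\|x-c_i^*\|+\|c_i^*-\hat c_i\|$. To control $\|c_i^*-\hat c_i\|$, we observe that $N_i(\hat c_i)$ and $P_i$ both have size at least $(1-\alpha)|\tilde X_i|$ inside $\tilde X_i$, so they share at least $(1-2\alpha)|\tilde X_i|$ points. Each shared point $x$ satisfies $\|\hat c_i-c_i^*\|\le \|x-\hat c_i\|+\|x-c_i^*\|$. Averaging over the intersection gives
\[
(1-2\alpha)|\tilde X_i|\,\|\hat c_i-c_i^*\|\le \mathtt{Cost}(N_i(\hat c_i),\hat c_i)+\mathtt{Cost}(P_i,c_i^*).
\]
Plugging in, and using $|X_i^*|\le |P_i|/(1-\alpha)\le|\tilde X_i|/(1-\alpha)$ to convert the $\alpha|X_i^*|$ misassigned or missing points, gives
\[
\mathtt{Cost}(X_i^*,\hat c_i)\le\Bigl(1+\frac{(6+\epsilon)\alpha-4\alpha^2}{(1-\alpha)(1-2\alpha)}\Bigr)\mathtt{Cost}(X_i^*,c_i^*).
\]
This is the same accounting as in \citet{huang2025new}. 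Tuning the split between points in $P_i$ (charged exactly) and points in $X_i^*\setminus P_i$ (charged the extra $\|c_i^*-\hat c_i\|$) so that exactly this constant emerges is the delicate step; I would follow their inequality chain verbatim. Summing over $i$, and noting that assigning each point to its nearest center in $\hat C$ only lowers cost, proves the ratio.

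\textbf{Step 3 (running time).} Each $R_i^j$ has size $m=O((\alpha\epsilon)^{-3}\log\frac1{\alpha\epsilon})$. The ball $B(r,2t)$ in $m$ dimensions with grid side $\theta$ contains $(O(t/\theta))^{m}=(m/\alpha\epsilon)^{O(m)}=2^{O(1/(\alpha\epsilon)^4)}$ grid points. The number of subsets $Q$ and the number of scales $l$ are both constants. Evaluating each candidate takes $O(|\tilde X_i|d)$ time, using linear-time selection for $N_i(c)$. Summing over $i$ and the $O(\log(k/\delta))$ iterations yields $O(2^{O(1/(\alpha\epsilon)^4)}nd\log\frac{k}{\delta})$.
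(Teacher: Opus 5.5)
Your Steps 1 and 3 follow the paper's argument: the candidate-set lemma built from Propositions~\ref{kumar2} and~\ref{kumar1}, and the grid count $2^{O((\alpha\epsilon)^{-3}\log^2(1/(\alpha\epsilon)))}$. Step 2, however, has a genuine gap, and it is the step where the whole approximation ratio is earned; you defer it to an external ``inequality chain.'' You do establish two facts there: $\mathtt{Cost}(N_i(\hat{c}_i),\hat{c}_i)\le(1+\alpha\epsilon)\mathtt{Cost}(P_i,\mathtt{Med}(P_i))$, and $\|\hat{c}_i-c_i^*\|_2\le\frac{(2+\alpha\epsilon)\mathtt{Cost}(X_i^*,c_i^*)}{(1-2\alpha)|\tilde{X}_i|}$. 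These correctly handle $X_i^*\setminus P_i$, with excess at most $\frac{\alpha(2+\alpha\epsilon)}{(1-\alpha)(1-2\alpha)}\mathtt{Cost}(X_i^*,c_i^*)$. They give no usable bound on $\mathtt{Cost}(P_i,\hat{c}_i)$, so points of $P_i$ cannot be ``charged exactly.'' Your greedy inequality controls the wrong set. The difference $\mathtt{Cost}(P_i,\hat{c}_i)-\mathtt{Cost}(N_i(\hat{c}_i),\hat{c}_i)$ is nonnegative, because the excluded points are the farthest from $\hat{c}_i$ and $|P_i|\ge|N_i(\hat{c}_i)|$. The natural bound $\mathtt{Cost}(P_i,\hat{c}_i)\le\mathtt{Cost}(N_i(\hat{c}_i),\hat{c}_i)+\mathtt{Cost}(P_i\setminus N_i(\hat{c}_i),\mathtt{Med}(P_i))+|P_i\setminus N_i(\hat{c}_i)|\,\|\hat{c}_i-\mathtt{Med}(P_i)\|_2$ leaves a middle term supported on at most $\alpha|\tilde{X}_i|$ points. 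Those points may be the outliers of $P_i$ and carry almost all of $\mathtt{Cost}(P_i,\mathtt{Med}(P_i))$, which gives a ratio near $2$. Charging all of $P_i$ through $\|\hat{c}_i-c_i^*\|_2$ instead costs up to $\frac{2+\alpha\epsilon}{1-2\alpha}\mathtt{Cost}(X_i^*,c_i^*)$, which is not $O(\alpha)$.

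The missing idea is the paper's exchange decomposition, taken against $\mathtt{Med}(P_i)$ rather than $c_i^*$. Writing $N:=N_i(\hat{c}_i)$,
\begin{align*}
&\mathtt{Cost}(P_i,\hat{c}_i)-\mathtt{Cost}(P_i,\mathtt{Med}(P_i))\\
&=\bigl[\mathtt{Cost}(N,\hat{c}_i)-\mathtt{Cost}(N,\mathtt{Med}(P_i))\bigr]\\
&\quad-\bigl[\mathtt{Cost}(N\setminus P_i,\hat{c}_i)-\mathtt{Cost}(N\setminus P_i,\mathtt{Med}(P_i))\bigr]\\
&\quad+\bigl[\mathtt{Cost}(P_i\setminus N,\hat{c}_i)-\mathtt{Cost}(P_i\setminus N,\mathtt{Med}(P_i))\bigr].
\end{align*}
For the first bracket you need greedy in the same-set form $\mathtt{Cost}(N,\hat{c}_i)\le\mathtt{Cost}(N_i(q),q)\le\mathtt{Cost}(N,q)$, with $q$ the good candidate from Step 1. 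That bracket is then at most $|N|\,\|q-\mathtt{Med}(P_i)\|_2\le\alpha\epsilon\,\mathtt{Cost}(P_i,\mathtt{Med}(P_i))$. The other two brackets are each at most $\alpha|\tilde{X}_i|\,\|\hat{c}_i-\mathtt{Med}(P_i)\|_2$, since $|P_i\setminus N|,|N\setminus P_i|\le\alpha|\tilde{X}_i|$. This requires the distance bound for $\mathtt{Med}(P_i)$ as well as for $c_i^*$; your intersection-averaging argument gives it verbatim, with the same constant $\frac{(2+\alpha\epsilon)\mathtt{Cost}(X_i^*,c_i^*)}{(1-2\alpha)|\tilde{X}_i|}$. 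The $P_i$-part excess is then $\frac{(4+\epsilon)\alpha}{1-2\alpha}\mathtt{Cost}(X_i^*,c_i^*)$. Adding the $X_i^*\setminus P_i$ excess gives exactly $\frac{6\alpha-4\alpha^2+\epsilon\alpha}{(1-\alpha)(1-2\alpha)}$. This is where the $4\alpha$ in the constant comes from, and without this argument your proposal does not establish the stated ratio.
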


\subsection{Proof of ~\Cref{alg1}}
\label{sec-proof}

We divide the proof into three main steps: 
First, we establish that for each predicted cluster, with high probability, the constructed candidate center set contains at least one point that is close to the true median of the correctly labeled subset of the predicted cluster. This is formalized in Lemma~\ref{lmcandidateset}, which leverages the geometric properties from Proposition~\ref{kumar1} and Proposition~\ref{kumar2} under our sampling design.
Second, we analyze the cost of the selected center from the candidate set. In Lemma~\ref{thapproximationratio}, we show that this center yields a clustering cost close to the optimal one, despite the noisy labels, by carefully bounding the additional cost incurred by misclassified points and the optimality of the greedy choice. 

Finally, we aggregate the bounds over all clusters to obtain the total clustering cost, and analyze the size of the candidate set and runtime of our algorithm. 

For convenience, we denote the intersection of $\tilde{X_i}$ and $X_i^*$ as $T_i$, i. e. , $T_i=\tilde{X_i}\cap X_i^*$.
\begin{lemma}
\label{lmcandidateset}
For predicted cluster $\tilde{X_i}$, with a probability of $1 - \frac\delta k$, there exists a point $q \in \tilde{C}_i$ satisfying:
\begin{align}
||q-\mathtt{Med}(T_i)||_2 \leqslant \frac{ \alpha \epsilon\times \mathtt{Cost}(T_i, \mathtt{Med}(T_i))}{ |T_i|
}. 
\end{align}
\end{lemma}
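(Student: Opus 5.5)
We analyze one outer iteration $j$ for a fixed cluster $i$ and show it succeeds with probability at least $0.025$. Since the $\lceil \log(\delta/k)/\log(0.975)\rceil$ iterations are independent, the probability that all of them fail is at most $0.975^{\log(\delta/k)/\log 0.975}=\delta/k$. Throughout, write $m=\mathtt{Med}(T_i)$ and $\mu=\mathtt{Cost}(T_i,m)/|T_i|$. By \Cref{def-la}, $|T_i|\ge(1-\alpha)|\tilde X_i|$, so each uniform sample from $\tilde X_i$ lands in $T_i$ with probability at least $1-\alpha>1/2$. Conditioned on landing in $T_i$, the sample is uniform on $T_i$.

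\textbf{Step 1: estimating $\mu$.} We want $y_i^j\in T_i$, and we want at least $1/\zeta$ of the $\lceil 2/((1-\alpha)\zeta)\rceil$ points of $Q_i^j$ to lie in $T_i$. The expected number in $T_i$ is at least $2/\zeta=26$. A Chernoff (or Markov-type) bound then gives at least $13$ such points with constant probability. The algorithm enumerates all subsets of size $1/\zeta$, so one enumerated $Q$ is a uniform sample from $T_i$ and $y_i^j$ is uniform in $T_i$. Applying \Cref{kumar2} with $P=T_i$ and $\zeta=1/13$ gives $v\zeta^3/2\le\mu\le v/\zeta$ with probability more than $(1-\zeta^2)^{14}/2$. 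Because $l$ ranges over $\lfloor\log_2 a\rfloor,\dots,\lceil\log_2 b\rceil$, some $t=2^l$ satisfies $\mu\le t\le 2\mu$.

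\textbf{Step 2: the subspace.} Let $R'=R_i^j\cap T_i$. Its expected size is at least $(1-\alpha)|R_i^j|\ge 4\cdot\frac{\log(1/\epsilon')}{\epsilon'^3}$, where $\epsilon'=\alpha\epsilon/2$. With constant probability $|R'|\ge \frac{\gamma}{\epsilon'^3}\log\frac1{\epsilon'}$ for $\gamma=2$, and $R'$ is a uniform sample of $T_i$. We then apply \Cref{kumar1} to $P=T_i$ with parameters $\epsilon'$ and $\gamma$. With probability at least $1/2$ this gives two facts:
\begin{itemize}
\item some $p\in\mathtt{span}(R')\subseteq\mathtt{span}(R_i^j)$ satisfies $\|p-m\|\le\epsilon'\mu=\alpha\epsilon\mu/2$;
\item some $r\in R'\subseteq R_i^j$ satisfies $\|r-m\|\le 2\mu$.
\end{itemize}
A technicality is that \Cref{kumar1} is stated for an exact sample size. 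A larger sample only enlarges the span and the set, so this causes no problem.

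\textbf{Step 3: the grid.} Consider the grid $G_r$ that CSC builds on $\mathtt{span}(R_i^j)$, centered at $r$. Its dimension is at most $|R_i^j|$ and its side length is $\theta=\alpha\epsilon t/(4|R_i^j|)$. Let $q$ be the grid point nearest to $p$. Then
\[
\|q-p\|\le\theta\sqrt{|R_i^j|}\le \alpha\epsilon t/4\le\alpha\epsilon\mu/2 .
\]
Combining with Step 2, $\|q-m\|\le\alpha\epsilon\mu$, as required. We must also check that $q\in B(r,2t)$. We have $\|q-r\|\le\|q-p\|+\|p-m\|+\|m-r\|\le \alpha\epsilon\mu+2\mu$. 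This is at most $2t$ whenever $t\ge(1+\alpha\epsilon/2)\mu$. The condition is not automatic, since $t$ may be as small as $\mu$. We handle it by choosing $l$ one step higher, which is still in range: $b=v/\zeta\ge\mu$ leaves room only if the ceiling gives slack, so to be safe one may use $t\in[2\mu,4\mu]$ and take the next power of two. Alternatively we can note $\alpha\epsilon<1$ and argue with $t\in(\mu(1+\alpha\epsilon/2),2\mu(1+\alpha\epsilon/2)]$. In either case $\theta$ carries the factor $t\le 4\mu$, so the constants must be rechecked: the side length, rather than the ball radius, is the part that must be tight.

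\textbf{Main obstacle.} The main obstacle is this bookkeeping. We need a single $t$ in the enumerated range for which the ball radius $2t$ covers $m$ plus slack and the grid error $\theta\sqrt{\dim}$ stays below $\alpha\epsilon\mu/2$. We also need the constant success probabilities from Steps 1 and 2 to multiply to at least $0.025$. The probability from \Cref{kumar2} is about $0.46$, and each of the two "enough samples in $T_i$" events must be shown to hold with probability around $1/2$ or better via Chernoff bounds. Together with the factor $1/2$ from \Cref{kumar1}, the product is roughly $0.46\cdot\frac12\cdot\frac12\cdot\frac12\cdot\frac12$, which is at least $0.025$ if the sample-count events each have probability at least about $0.9$. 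The oversized sample sets provide this.
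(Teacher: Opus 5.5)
Your route is the paper's own. You apply \Cref{kumar2} on $T_i$ to bracket $\mu$, pick a power of two with $\mu\le t<2\mu$, apply \Cref{kumar1} on $R_i^j\cap T_i$ with accuracy $\alpha\epsilon/2$, round to the grid, and amplify over trials. Your probability accounting is fine. Chernoff is in fact the right tool for the two sample-count events; the paper cites Markov, an upper-tail bound, which by itself does not give the $1/2$ used there when $\alpha$ is near $1/2$.

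The gap is the one you flag and then leave open: showing that the rounded point lies in $B(r,2t)$. The paper's proof simply asserts that some $q\in S$ lies within $\alpha\epsilon t/4$ of $o$ and never checks the ball. So you have found a real hole, but neither of your repairs closes it. Raising $l$ by one, or insisting on $t>(1+\alpha\epsilon/2)\mu$, need not stay in $\{\lfloor\log_2 a\rfloor,\dots,\lceil\log_2 b\rceil\}$, because \Cref{kumar2} only gives $\mu\le b$. If $\mu$ is just below $2^{\lceil\log_2 b\rceil}$, the only admissible $t\ge\mu$ is that power of two, which can be arbitrarily close to $\mu$ (equal to it when $\mu=b$ is a power of two). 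Then rounding $p$ can give $\|q-r\|$ up to $2\mu+\alpha\epsilon\mu/2+\theta\sqrt{D}/2>2t$. Your diagnosis is also reversed. The side length has a great deal of slack: in a grid of dimension $D\le|R_i^j|$ the nearest grid point is within $g=\theta\sqrt{D}/2\le\alpha\epsilon t/(8\sqrt{|R_i^j|})$, far below the budget $\alpha\epsilon t/4$. It is the radius that is tight.

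The missing idea is to keep $t\in[\mu,2\mu)$ and round not $p$ but its projection onto a slightly shrunken ball, paying for this with the grid's slack.
\begin{itemize}
\item Let $p''$ be the projection of $p$ onto $B(r,2t-g)$. It lies on the segment from $r$ to $p$, hence in $\mathtt{span}(R_i^j)$.
\item Let $m'$ be the projection of $m$ onto the same ball. Because $\|m-r\|\le 2\mu\le 2t$, we have $\|m-m'\|\le g$.
\item Projection onto a convex set is nonexpansive, so $\|p''-m\|\le\|p''-m'\|+\|m'-m\|\le\|p-m\|+g$.
\item The grid point $q\in G_r$ nearest to $p''$ satisfies $\|q-r\|\le(2t-g)+g=2t$, so $q\in S_r$.
\end{itemize}
Combining these,
\[
\|q-m\|\le\frac{\alpha\epsilon\mu}{2}+2g\le\frac{\alpha\epsilon\mu}{2}+\frac{\alpha\epsilon\mu}{2\sqrt{|R_i^j|}}\le\alpha\epsilon\mu .
\]
With this step replacing the ball check in your Step 3, the rest of your argument goes through.
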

\begin{proof}
   
{\color{red}}First, under the learning-augmented setting, we have $|T_i|\geqslant(1-\alpha)\max(|\tilde{X_i}|, |X_i^*|)>\frac{1}{2}|\tilde{X_i}|$. As we uniformly sample a point $y_i$ from $\tilde{X_i}$, and uniformly sample a set ${Q}_i^j$ from $\tilde{X_i}$ with size $\frac{2}{(1-\alpha)\zeta}$ in the first stage of our algorithm, by employing Markov's inequality, we deduce that, with probability at least $\frac{1}{2}\times\frac{1}{2}=\frac{1}{4}$, the following two events occur simultaneously:\begin{align}
 y_i\in T_i,|T_i\cap {Q}_i^j|\geqslant \frac{1}{\zeta}\nonumber.
\end{align}
Now assume both of these events occur. There exists a subset
${Q} \subseteq (T_i\cap {Q}_i^j)$ of size $\frac{1}{\zeta}$. According to  Proposition~\ref{kumar2}, if we set $p_0=y_i^j,S=Q,P=T_i$, for 
$v=\mathtt{Cost}({Q}, y_i^j)$, we have
\begin{align}
\frac{v\zeta^3}{2}\leqslant \frac{\mathtt{Cost}(T_i, \mathtt{Med}(T_i))}{|T_i|}\leqslant \frac{v}{\zeta}
\end{align} with probability at least $\frac{(1-\zeta^2)^{1/\zeta+1}}{2}$. 
In the second stage of our algorithm, we iterate over all subsets of ${Q}_i^j$ of size $\frac{1}{\zeta}$, therefore, there exists an integer $l$ in the interval $\lfloor \log{\frac{v\zeta^3}{2}}, \log{\frac{v}{\zeta}}\rceil$ such that $t=2^l$ satisfies
\begin{align}
    t/2\leqslant\frac{\mathtt{Cost}(T_i, \mathtt{Med}(T_i))}{|T_i|}\leqslant t. \label{eq4}
\end{align}
Similarly, as we uniformly sample a set ${R}_i^j$ from $\tilde{X_i}$
with size $\frac{4\log(1/\frac{\alpha\epsilon}{2})}{(1-\alpha)(\frac{\alpha\epsilon}{2})^{3}}$ in the first stage of our algorithm, by employing Markov's inequality, we have
$|T_i\cap {R}_i^j|\geqslant \frac{2\log(1/\frac{\alpha\epsilon}{2})}{(\frac{\alpha\epsilon}{2})^{3}} $
with probability at least $1/2$. 
Thus, according to the  Proposition~\ref{kumar1}, with
probability at least $1/2$ , the following two events happen:
\begin{enumerate}
    \item  The flat $\mathtt{span}({R}_i^j)$ contains a point at a distance $\leqslant \frac{\alpha\epsilon\mathtt{Cost}(T_i,\mathtt{Med}(T_i))}{2|T_i|}$ from $\mathtt{Med}(T_i)$,
    \item ${R}_i^j$ contains a point at a distance  $\leqslant 2 \times \frac{\mathtt{Cost}(T_i, \mathtt{Med}(T_i))}{|T_i|}$ from the center $\mathtt{Med}(T_i)$. 
\end{enumerate}
So, the flat $\mathtt{span}({R}_i^j)$ contains a point $o$ such that
\begin{align}
||o-\mathtt{Med}(T_i)||_2\leqslant\frac{\alpha\epsilon\times \mathtt{Cost}(T_i, \mathtt{Med}(T_i))}{2|T_i|}. \label{eq5}\end{align}
Therefore, under the construction of the grid in  \Cref{candidate_set_construction_alg}, there must exist a point \( q \in {S} \) satisfying  
\begin{align}
    ||q-o||_2 \leqslant \frac{\alpha\varepsilon t}{4} \leqslant  \frac{\alpha\varepsilon\times\mathtt{Cost}(T_i, \mathtt{Med}(T_i))}{2|T_i|}, 
    \label{eq6}
\end{align}
where the second inequality comes from inequality \eqref{eq4}.
Combining inequality \eqref{eq5} and inequality \eqref{eq6}, by triangle inequality, we have
\begin{align}
||q-\mathtt{Med}(T_i)||_2\leqslant\frac{\alpha\epsilon \times\mathtt{Cost}(T_i, \mathtt{Med}(T_i))}{|T_i|}\nonumber. 
\end{align}
Now, we calculate the probability that all events succeed in a single trial. The combined success probability is $\frac{(1-\zeta^2)^{1/\zeta+1}}{32}.$
We have $\frac{(1-\zeta^2)^{1/\zeta+1}}{32}>\frac{e^{-\frac{\zeta}{1-\zeta}}}{32}\geq 0.025$
when $\zeta \in (0,1/12)$. Here, the first inequality is a direct application of $\ln(1-x)>-\frac{x}{1-x}$ and the second inequality is obtained by leveraging the fact that the function is monotonically decreasing.
Therefore, the probability of failure for each trial is less than $0.975$. Since we perform $\left\lceil\frac{\log(\delta/k)}{\log(0.975)}\right\rceil$runs, the overall success probability is therefore greater than $ 1-0.975^{\frac{\log(\delta/k)}{\log(0.975)}}=1-\frac{\delta}{k}.$
\end{proof}

We now turn to evaluate the clustering cost incurred by the selected centers. Lemma~\ref{thapproximationratio} plays a central role in this analysis—it quantifies how far the selected center might be from the true cluster center due to noisy labels and sampling variance, and how this error translates into overall clustering cost.
\begin{lemma}
\label{thapproximationratio}
For each predicted cluster $\tilde{X_i}$, we have:
\begin{align}
  \mathtt{Cost}(X_i^*, \hat{c_i})\leqslant\left(1+\frac{6\alpha-4\alpha^2+\alpha\epsilon}{(1-\alpha)(1-2\alpha)}\right)\mathtt{Cost}(X_i^*, c_i^*). \nonumber
\end{align}
\end{lemma}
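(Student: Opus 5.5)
We follow the template of the analysis of \citet{huang2025new}, replacing their grid-search center by the candidate $q$ from Lemma~\ref{lmcandidateset}. Write $T_i=\tilde{X_i}\cap X_i^*$, $m=|\tilde{X_i}|$, $m^*=|X_i^*|$, $\mu=\mathtt{Med}(T_i)$ and $\mathrm{OPT}_i=\mathtt{Cost}(X_i^*,c_i^*)$. We condition on the event of Lemma~\ref{lmcandidateset}, so some $q\in C_i$ satisfies $\|q-\mu\|_2\le \alpha\epsilon\,\mathtt{Cost}(T_i,\mu)/|T_i|$. Since $\mu$ minimizes the cost over $T_i$, we have $\mathtt{Cost}(T_i,\mu)\le \mathtt{Cost}(T_i,c_i^*)\le \mathrm{OPT}_i$.

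\textbf{Step 1 (quality of the greedy choice).} Let $N_i(c)$ be the $\lceil(1-\alpha)m\rceil$ nearest points of $\tilde{X_i}$ to $c$, and set $f(c)=\mathtt{Cost}(N_i(c),c)$. Since $|T_i|\ge(1-\alpha)m$, $f(c)$ is at most the cost of any $\lceil(1-\alpha)m\rceil$ points of $T_i$ to $c$, and hence $f(c)\le \mathtt{Cost}(T_i,c)$. Greedy selection then gives
\[
f(\hat c_i)\le f(q)\le \mathtt{Cost}(T_i,q)\le \mathtt{Cost}(T_i,\mu)+|T_i|\,\|q-\mu\|_2\le(1+\alpha\epsilon)\mathtt{Cost}(T_i,c_i^*).
\]

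\textbf{Step 2 (locating $\hat c_i$ relative to $c_i^*$).} The set $N_i(\hat c_i)$ has at least $(1-\alpha)m$ points, and at most $|\tilde X_i\setminus T_i|\le \alpha m$ of them lie outside $X_i^*$. So at least $(1-2\alpha)m$ of its points lie in $T_i\subseteq X_i^*$. For each such point $x$, the triangle inequality gives $\|\hat c_i-c_i^*\|_2\le \|x-\hat c_i\|_2+\|x-c_i^*\|_2$. Averaging over these points yields
\[
(1-2\alpha)m\,\|\hat c_i-c_i^*\|_2\le f(\hat c_i)+\mathtt{Cost}(N_i(\hat c_i)\cap X_i^*,c_i^*).
\]
The points of $N_i(\hat c_i)$ outside $X_i^*$ are the delicate part. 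I would use that they are closer to $\hat c_i$ than the excluded points of $T_i$, following the corresponding step in \citet{huang2025new}.

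\textbf{Step 3 (cost on $X_i^*$).} Split the cost as $\mathtt{Cost}(X_i^*,\hat c_i)\le \mathtt{Cost}(X_i^*,c_i^*)+|X_i^*\setminus N_i(\hat c_i)|\cdot\|\hat c_i-c_i^*\|_2+\big(\mathtt{Cost}(X_i^*\cap N_i(\hat c_i),\hat c_i)-\mathtt{Cost}(X_i^*\cap N_i(\hat c_i),c_i^*)\big)$. Equivalently, charge each $x\in X_i^*$ either through $\hat c_i$ directly or through $c_i^*$ plus the shift $\|\hat c_i-c_i^*\|_2$. The number of points of $X_i^*$ not in $N_i(\hat c_i)$ is at most $m^*-(1-2\alpha)m$. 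Using $m\le m^*/(1-\alpha)$ and $m\ge(1-\alpha)m^*$, this count is $O(\alpha)m^*$, so the shift term costs at most a factor $\frac{\alpha}{(1-\alpha)(1-2\alpha)}$ times $(f(\hat c_i)+\mathrm{OPT}_i)$. Plugging in Step 1 and collecting coefficients should produce $1+\frac{6\alpha-4\alpha^2+\alpha\epsilon}{(1-\alpha)(1-2\alpha)}$. The $\alpha\epsilon$ term comes from the $(1+\alpha\epsilon)$ factor, multiplied by a coefficient that reduces to $1$ after simplification.

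\textbf{Main obstacle.} The main obstacle is the exact constant bookkeeping in Steps 2 and 3, specifically controlling the mislabeled points in $N_i(\hat c_i)$ so that the total coefficient is exactly $6\alpha-4\alpha^2$. I would first try to mirror the inequality chain of \citet{huang2025new}, since our guarantee on $q$ is the same as their grid guarantee. If direct charging loses constants, I would instead compare $\mathtt{Cost}(X_i^*,\hat c_i)$ to $\mathtt{Cost}(T_i,\hat c_i)$ and bound the excess $X_i^*\setminus T_i$, which has at most $\alpha m^*$ points, via the shift $\|\hat c_i-c_i^*\|_2$.
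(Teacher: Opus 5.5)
Your Steps 1 and 2 are correct. Step 2 is in fact already complete, since averaging the triangle inequality over $N_i(\hat c_i)\cap T_i$ never involves the mislabeled points. Step 3, however, does not close, and the problem is a genuine gap rather than bookkeeping.

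First, the shift coefficient is miscomputed: $\frac{m^*-(1-2\alpha)m}{(1-2\alpha)m}\le\frac{1}{(1-\alpha)(1-2\alpha)}-1=\frac{\alpha(3-2\alpha)}{(1-\alpha)(1-2\alpha)}$. With $f(\hat c_i)+\mathrm{OPT}_i\le(2+\alpha\epsilon)\mathrm{OPT}_i$, the shift term alone is therefore bounded only by $\frac{6\alpha-4\alpha^2+\alpha^2\epsilon(3-2\alpha)}{(1-\alpha)(1-2\alpha)}\mathrm{OPT}_i$. This leaves exactly $\alpha\epsilon\,\mathrm{OPT}_i$ of slack for the remaining term $\Delta=\mathtt{Cost}(X_i^*\cap N_i(\hat c_i),\hat c_i)-\mathtt{Cost}(X_i^*\cap N_i(\hat c_i),c_i^*)$, which you never bound. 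The greedy rule only compares $\hat c_i$ with $q\approx\mathtt{Med}(T_i)$ on $N_i(\hat c_i)$, never with $c_i^*$. In fact $\Delta$ can be a constant fraction of $\mathrm{OPT}_i$ regardless of $\epsilon$. On the line, let $X_i^*=T_i$ consist of $2u$ points at $-1$, $3u$ at $0$ and $2u$ at $\frac12$. Let $\tilde X_i$ also contain $2u$ mislabeled points at $\frac12$ (belonging, say, to an optimal cluster made of them and $7u$ points at $1$), so $\alpha=\frac29$. Then $c_i^*=\mathtt{Med}(T_i)=0$ and $\mathrm{OPT}_i=3u$. The trimmed cost $f$ is uniquely minimized at $\frac12$ (value $\frac32u$, versus about $2u$ near $0$), so $\hat c_i=\frac12$ once a point there is sampled. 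Then $N_i(\hat c_i)\cap X_i^*$ is the set of points at $0$ and $\frac12$, and $\Delta=\frac32u-u=\frac16\mathrm{OPT}_i>\alpha\epsilon\,\mathrm{OPT}_i$ for $\epsilon<\frac34$. So the two pieces of your Step 3 cannot be bounded separately to give the stated constant.

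Your fallback split $X_i^*=T_i\cup(X_i^*\setminus T_i)$ is the right one, but bounding $\mathtt{Cost}(T_i,\hat c_i)$ needs the idea you are missing. Use $\mu=\mathtt{Med}(T_i)$, not $c_i^*$, as the reference on $T_i$, so that greedy can be applied on the fixed set $N=N_i(\hat c_i)$ via $f(\hat c_i)\le f(q)\le\mathtt{Cost}(N,q)$. Write
\begin{align*}
\mathtt{Cost}(T_i,\hat c_i)-\mathtt{Cost}(T_i,\mu)
&=\big[\mathtt{Cost}(N,\hat c_i)-\mathtt{Cost}(N,\mu)\big]\\
&\quad-\big[\mathtt{Cost}(N\setminus T_i,\hat c_i)-\mathtt{Cost}(N\setminus T_i,\mu)\big]\\
&\quad+\big[\mathtt{Cost}(T_i\setminus N,\hat c_i)-\mathtt{Cost}(T_i\setminus N,\mu)\big].
\end{align*}
The first bracket is at most $|N|\,\|q-\mu\|_2\le\alpha\epsilon\,\mathtt{Cost}(T_i,\mu)$. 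The other two are at most $\alpha m\,\|\hat c_i-\mu\|_2$ each, and $\|\hat c_i-\mu\|_2\le\frac{(2+\alpha\epsilon)\mathrm{OPT}_i}{(1-2\alpha)m}$ follows from your Step 2 argument with $\mu$ in place of $c_i^*$. Together with $\mathtt{Cost}(T_i,\mu)\le\mathtt{Cost}(T_i,c_i^*)$ this gives $\mathtt{Cost}(T_i,\hat c_i)\le\mathtt{Cost}(T_i,c_i^*)+\frac{4\alpha+\alpha\epsilon}{1-2\alpha}\mathrm{OPT}_i$. Only the at most $\alpha m^*\le\frac{\alpha m}{1-\alpha}$ points of $X_i^*\setminus T_i$ are routed through $c_i^*$ plus the shift $\|\hat c_i-c_i^*\|_2$, costing at most $\frac{\alpha(2+\alpha\epsilon)}{(1-\alpha)(1-2\alpha)}\mathrm{OPT}_i$. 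The two excesses sum to exactly $\frac{6\alpha-4\alpha^2+\alpha\epsilon}{(1-\alpha)(1-2\alpha)}\mathrm{OPT}_i$. In your decomposition the points of $T_i\setminus N$ pay $\|\hat c_i-c_i^*\|_2$, and reducing $\Delta$ to the greedy comparison then charges them again for $\|c_i^*-\mu\|_2$. That double payment is where the constant is lost.
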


\begin{proof}
A critical component of the analysis is to relate the selected center $\hat{c}_i$to the true center $c_i^*$ and the median of correctly labeled points $\mathtt{Med}(T_i)$. 
We begin by splitting the cost into two parts, 
\begin{align}
    \mathtt{Cost}(X_i^*, \hat{c}_i)=\mathtt{Cost}(T_i, \hat{c}_i)+\mathtt{Cost}(X_i^*\backslash T_i, \hat{c}_i) \label{for-split}
\end{align}
We focus on the   term ``$\mathtt{Cost}(T_i, \hat{c}_i)$'' first. 
 To establish the equality, we first compute the additional cost incurred by assigning points in $T_i$ to $\hat{c}_i$ by decomposing the sets $T_i$ and $N_i(\hat{c}_i)$ into three disjoint partitions:
$S_1 = T_i \cap N_i(\hat{c}_i),
S_2 = T_i \setminus N_i(\hat{c}_i),
S_3 = N_i(\hat{c}_i) \setminus T_i.$
This implies $T_i = S_1 \cup S_2$ and $N_i(\hat{c}_i) = S_1 \cup S_3$. Then the $\mathtt{Cost}(T_i, \hat{c}_i)-\mathtt{Cost}(T_i, \mathtt{Med}(T_i))$ can be written as
\begin{align}
& \mathtt{Cost}(T_i, \hat{c}_i) - \mathtt{Cost}(T_i, \mathtt{Med}(T_i)) \nonumber\\
& = \left[\mathtt{Cost}(S_1, \hat{c}_i) + \mathtt{Cost}(S_2, \hat{c}_i)\right]  \nonumber\\
&\quad- \left[\mathtt{Cost}(S_1, \mathtt{Med}(T_i)) + \mathtt{Cost}(S_2, \mathtt{Med}(T_i))\right] \nonumber\\
& = \left[\mathtt{Cost}(S_1, \hat{c}_i) + \mathtt{Cost}(S_3, \hat{c}_i)\right]  \nonumber\\
&\quad- \left[\mathtt{Cost}(S_1, \mathtt{Med}(T_i)) + \mathtt{Cost}(S_3, \mathtt{Med}(T_i))\right] \nonumber\\
&\quad- \left[\mathtt{Cost}(S_3, \hat{c}_i) - \mathtt{Cost}(S_3, \mathtt{Med}(T_i))\right] \nonumber\\
&\quad+ \left[\mathtt{Cost}(S_2, \hat{c}_i) - \mathtt{Cost}(S_2, \mathtt{Med}(T_i))\right] \nonumber\\
&= \left[\mathtt{Cost}(N_i(\hat{c}_i), \hat{c}_i) - \mathtt{Cost}(N_i(\hat{c}_i), \mathtt{Med}(T_i))\right] \nonumber\\
&\quad- \left[\mathtt{Cost}(N_i(\hat{c}_i)\setminus T_i, \hat{c}_i) - \mathtt{Cost}(N_i(\hat{c}_i)\setminus T_i, \mathtt{Med}(T_i))\right] \nonumber\\
&\quad+\left[\mathtt{Cost}(T_i\setminus N_i(\hat{c}_i), \hat{c}_i) - \mathtt{Cost}(T_i\setminus N_i(\hat{c}_i), \mathtt{Med}(T_i))\right]\label{eq:short_proof}
\end{align}

We also have $|T_i\setminus {N}_i(\hat{c}_i))|\leqslant\alpha |\tilde{X_i}|$ and $ |{N}_i(\hat{c}_i))\setminus T_i|\leqslant\alpha |\tilde{X_i}|$.
So we can find an upper bound for $\mathtt{Cost}(T_i\setminus N_i(\hat{c}_i), \hat{c}_i) - \mathtt{Cost}(T_i\setminus N_i(\hat{c}_i), \mathtt{Med}(T_i))$ by triangle inequality as
\begin{align}
&\mathtt{Cost}(T_i\setminus N_i(\hat{c}_i), \hat{c}_i) - \mathtt{Cost}(T_i\setminus N_i(\hat{c}_i), \mathtt{Med}(T_i))\nonumber\\&\leqslant |T_i\setminus N_i|\times||\mathtt{Med}(T_i)-\hat{c}_i||_2\leqslant\alpha |\tilde{X_i}|||\mathtt{Med}(T_i)-\hat{c}_i||_2\label{eq11}
.\end{align}
We can obtain inequality $\|\mathtt{Med}(T_i)-\hat{c}_i||_2\leqslant\frac{(2+\alpha\epsilon)\mathtt{Cost}(X_i^*, c_i^*)}{(1-2\alpha)|\tilde{\tilde{X_i}}|}$ through triangle inequality (the detailed derivation is provided in the full version of the paper.).
 Then we have inequality \eqref{eq11}
$\leqslant\frac{(2\alpha+\alpha^2\epsilon)\mathtt{Cost}(X_i^*, c_i^*)}{1-2\alpha}.$

Similarly, we obtain \begin{align}&\mathtt{Cost}(N_i(\hat{c}_i)\setminus T_i, \hat{c}_i) - \mathtt{Cost}(N_i(\hat{c}_i)\setminus T_i, \mathtt{Med}(T_i))
\nonumber\\&\leqslant\frac{(2\alpha+\alpha^2\epsilon)\mathtt{Cost}(X_i^*, c_i^*)}{1-2\alpha}.\label{eq10}\end{align}
Now we find a upper bound for $\mathtt{Cost}(N_i(\hat{c}_i), \hat{c}_i) - \mathtt{Cost}(N_i(\hat{c}_i), \mathtt{Med}(T_i)).$ 
Because of our greedy selection, we have $\mathtt{Cost}(N_i(\hat{c}_i),\hat{c}_i)\leqslant\mathtt{Cost}(N_i(q),q)\leqslant\mathtt{Cost}(N_i(\hat{c}_i),q)$. So
\begin{align}
    &\mathtt{Cost}(N_i(\hat{c}_i), \hat{c}_i) - \mathtt{Cost}(N_i(\hat{c}_i), \mathtt{Med}(T_i))\nonumber\\&\leqslant\mathtt{Cost}(N_i(\hat{c}_i),q)-\mathtt{Cost}(N_i(\hat{c}_i), \mathtt{Med}(T_i))\nonumber\\&\leqslant|(1-\alpha)|\tilde{X_i}||\times||q-\mathtt{Med}(T_i)||_2.\label{eq14}
    \end{align}
By applying Lemma~\ref{lmcandidateset}, we can obtain inequality \eqref{eq14} $\leqslant\alpha \epsilon\times \mathtt{Cost}(T_i, \mathtt{Med}(T_i)).$
Putting inequality \eqref{eq11}, inequality \eqref{eq10} and inequality \eqref{eq14} together, we obtain the following bound for the left side of equation \eqref{for-split}
\begin{align}
&\mathtt{Cost}(T_i, \hat{c}_i) - \mathtt{Cost}(T_i, c_i^*)\nonumber\\&\leqslant
    \mathtt{Cost}(T_i, \hat{c}_i) - \mathtt{Cost}(T_i, \mathtt{Med}(T_i))\nonumber\\&\leqslant \frac{(4\alpha+\epsilon\alpha)\mathtt{Cost}(X_i^*, c_i^*)}{1-2\alpha}\label{eq12}.
\end{align}

Next, we consider the second term ``$\mathtt{Cost}(X_i^*\backslash T_i, \hat{c}_i)$'' in equation \eqref{for-split}. By triangle inequality 
\begin{align}
\mathtt{Cost}(X_i^*\backslash T_i, \hat{c}_i)&\leqslant \mathtt{Cost}(X_i^*/T_i, c_i^*) \nonumber\\&+|X_i^*\backslash T_i|\times||\hat{c}_i-c_i^*||_2, \label{eq36}
\end{align}
Subsequently, we bound $|X_i^*\backslash T_i|$. It follows from \Cref{def-la} that
$(1-\alpha)|X_i^*|\leqslant|T_i|,|T_i|\leqslant |\tilde{X_i}|, $
so, we can bound $|X_i^*\backslash T_i|$ as 
\begin{align}|X_i^*\backslash T_i|=|X_i^*|-|T_i|\leqslant\alpha|X_i^*| \leqslant\frac{\alpha |\tilde{X_i}|}{1-\alpha}. \label{eq37}\end{align}
Combining \eqref{eq36}, \eqref{eq37}, and the inequality $\|\hat{c}_i-c_i^*||_2\leqslant\frac{(2+\alpha\epsilon)\mathtt{Cost}(X_i^*, c_i^*)}{(1-2\alpha)|\tilde{{X_i}}|}$ we obtain in the full version of the paper, we have
\begin{align}
    &\mathtt{Cost}(X_i^*\backslash T_i, \hat{c}_i)-\mathtt{Cost}(X_i^*\backslash T_i, c_i^*)\nonumber\\&\leqslant \frac{\alpha(2+\alpha\epsilon)\mathtt{Cost}(X_i^*, c_i^*)}{(1-\alpha)(1-2\alpha)}.  \label{eq15}
\end{align}

Now, we derive the final approximation guarantee. Combining inequality \eqref{eq12}
and inequality \eqref{eq15}, we have
\begin{align}
\mathtt{Cost}(X_i^*, \hat{c}_i)&=\mathtt{Cost}(T_i, \hat{c}_i)+\mathtt{Cost}(X_i^*/T_i, \hat{c}_i) \nonumber\\
	&\leqslant 	\left(1+ \frac{6\alpha-4\alpha^2+\epsilon\alpha}{(1-\alpha)(1-2\alpha)}\right)\mathtt{Cost}(X_i^*, c_i^*).\nonumber
\end{align}

\end{proof}

We now proceed to formally prove Theorem 2.1 by establishing both the approximation ratio and the runtime complexity.

\begin{proof}[\textbf{Proof of \Cref{alg1}}]
Our first step is to compute the approximation ratio of the algorithm. In each cluster, by Lemma~\ref{thapproximationratio}, we obtain 
\begin{align}
    \mathtt{Cost}(X_i^*, \hat{c}_i)
	\leqslant \left(1+ \frac{6\alpha-4\alpha^2+\epsilon\alpha}{(1-\alpha)(1-2\alpha)}\right)\mathtt{Cost}(X_i^*, c_i^*). \nonumber
\end{align}
Therefore, for the entire instance, we have
\begin{align}&\sum_{i\in[k]}\mathtt{Cost}(X_{i}^{*}, \{\widehat{c}_{j}\}_{j=1}^{k})\nonumber	\\&\leqslant \left(1+ \frac{6\alpha-4\alpha^2+\epsilon\alpha}{(1-\alpha)(1-2\alpha)}\right)\sum_{i\in[k]}\mathtt{Cost}(X_i^*, c_i^*). \nonumber
\end{align}
We now assess the time complexity of the algorithm. This involves analyzing the size of the candidate center set generated via sampling and grid discretization, and the cost incurred in evaluating all candidate centers.

First, we compute the size of set of candidate centers. The size of the candidate center set we ultimately construct is
\begin{align}
&O\left(|R|\left(\frac{1}{(\alpha\epsilon)^4}\log\frac{1}{(\alpha\epsilon)}\right)^{O(|R|)}\log\frac{k}{\delta}\right)\nonumber\\&= O\left(2^{O(\frac{1}{(\alpha\epsilon)^3}\log^2\frac{1}{(\alpha\epsilon)})}\log\frac{k}{\delta}\right)\leqslant O\left(2^{O(\frac{1}{(\alpha\epsilon)^4})}\log\frac{k}{\delta}\right). \nonumber
\end{align}
For each candidate point within the candidate center set, the time needed to calculate its cost is $|\tilde{X_i}|d$. Consequently, the overall time complexity of the algorithm is

\begin{align}
\sum_{i\in[k]}2^{O(\frac{1}{(\alpha\epsilon)^4})}|\tilde{X_i}|d\log\frac{k}{\delta}=2^{O(\frac{1}{(\alpha\epsilon)^4})} nd\log \frac{k}{\delta}. \nonumber
\end{align}\nonumber
Next, we analyze the success probability of the algorithm. 
As we obtained in Lemma \ref{lmcandidateset}, the success probability in each cluster of the algorithm is $1-\frac{\delta}{k}$, therefore, by the union bound, the overall success probability of the algorithm $\geqslant 1-k\times\frac{\delta}{k}=1-\delta$. 
\end{proof}

\section{Experiment}\label{Experiment}
We evaluated our algorithms on real-world datasets. The experiments were conducted on a server with an Intel(R) Xeon(R) Gold 6154 CPU and 1024GB of RAM. For all experiments, we report the average clustering cost and its standard deviation over 10 independent runs.

\textbf{Datasets}. Following the work of \citet{Nguyen2022ImprovedLA}, \citet{Ergun2021LearningAugmentedKC} and \citet{huang2025new}, we evaluate our algorithms on the CIFAR-10 $(n=50,000,d=3,072)$ \cite{krizhevsky2009learning}, PHY $(n=10,000,d=50)$ \cite{kddcup2004}, and MNIST  $(n=1,797,d=64)$ \cite{6296535} datasets using a range of error rates $\alpha$. We additionally evaluated our algorithm's performance on another high dimensional dataset Fashion-MNIST $(n=60000, d=784)$ \cite{fashion_mnist} 

\textbf{Predictor Generation and Error Simulation} To evaluate our algorithms, we first computed a ground-truth partition for each dataset using Lloyd's algorithm initialized with KMedoids++(denoteed as KMed++). We then generated corrupted partitions with the error rate, $\alpha$, by randomly selecting an $\alpha$ fraction of points in each true cluster and reassigning them to randomly chosen cluster(denoted as Predictor). To ensure a fair comparison, every algorithm was tested on the exact same set of corrupted labels for any given error rate $\alpha$.

\textbf{Algorithms} In our experiments, we evaluate our proposed Sample-and-Search algorithm . We compare its performance against other state-of-the-art learning-augmented methods, including the algorithm from \citet{Ergun2021LearningAugmentedKC} (denoted as EFS+), \citet{Nguyen2022ImprovedLA} (denoted as NCN) and the recent work by \citet{huang2025new} (denoted as HFH+). As noted by ~\citet{Nguyen2022ImprovedLA}, the true error rate $\alpha$ is generally unknown in practice, which necessitates a search for its optimal value. To ensure a fair comparison, we implement a uniform hyperparameter tuning strategy for all evaluated algorithms . Specifically, we iterate over 10 candidate values for $\alpha$, which are chosen from uniformly spaced points in the interval $[0.01, 0.5]$. For each method, the $\alpha$ that minimizes the resulting $k$-median clustering cost is chosen to produce the final output. To assess the final clustering quality against the ground-truth labels, we additionally report the {\em Adjusted Rand Index (ARI)} and {\em Normalized Mutual Information (NMI)} in the full version of the paper..

\textbf{Results}
We present a comparative evaluation of our algorithm against several baselines in \Cref{tab:fashion_mnist_varied_alpha}  and \Cref{tab:phy_varied_k_compressed}.
\Cref{tab:fashion_mnist_varied_alpha} details the performance on the Fashion-MNIST ($n=60000, d=784$) for a fixed $k=10$ across a range of $\alpha$ values. \Cref{tab:phy_varied_k_compressed} shows the results on the PHY $(n=10,000,d=50)$ with a fixed $\alpha=0.2$ for various choices of $k$. Both sets of results demonstrate that our algorithm is substantially faster than all competing methods while generally achieving better approximation quality. Additional experiments on other datasets and a more detailed presentation of the results are available in the supplementary material. On these datasets, our algorithm also demonstrates significant advantages in terms of both running time and cost.
\section{Conclusion and Future work}
In this paper, we study the learning-augmented $k$-median clustering problem. We first introduce an algorithm for this problem based on a simple yet effective sampling method, then study its quality guarantees in theory, and finally conduct a set of experiments to compare  with other learning-augmented $k$-median algorithms. Both theoretical and experimental results demonstrate that our method achieves the state-of-the-art approximation ratio with higher efficiency than existing methods. Following this work, there are several opportunities to further improve our methods from both theoretical and practical perspectives. For example, is it possible to further reduce the time complexity of the algorithm by mitigating or eliminating the exponential dependence on $\epsilon$? Can the approximation ratio be further improved without a significant increase in time complexity? Furthermore, could a learning-augmented clustering algorithm be designed for the streaming model to more effectively handle large-scale data? 

\begin{table}[htbp]
\centering

\footnotesize
\setlength{\tabcolsep}{1mm}
\begin{tabular}{
    l
    l
    c c
    c c
}
\hline
\multicolumn{2}{c}{Condition} & \multicolumn{2}{c}{Cost} & \multicolumn{2}{c}{Time(s)} \\
\cmidrule(lr){3-4} \cmidrule(lr){5-6}
{$\alpha$} & {Method} & {Avg.} & {Std. Dev.} & {Avg.} & {Std. Dev.} \\
\midrule
\multicolumn{2}{l}{$0$}{KMed++} & 8.4054e+07 & - & - & - \\
\hline

\multirow{5}{*}{0.1}
& Predictor & 8.4259e+07 & - & - & - \\
& EFS+      & 8.4050e+07 & 115.17 & 270.47 & 12.97 \\
& HFH+    & 8.4049e+07 & 834.92 & 749.72 & 18.47 \\
& NCN      & 8.4050e+07 & 181.80 & 272.22 & 4.37  \\
& Ours     & \textbf{8.4048e+07} & 933.64 & \textbf{47.37}  & 0.78  \\
\hline

\multirow{5}{*}{0.2}
& Predictor & 8.4935e+07 & - & - & - \\
& EFS+     & 8.4057e+07 & 287.83 & 283.13 & 24.07 \\
& HFH+    & 8.4053e+07 & 1598.91 & 751.66 & 25.42 \\
& NCN      & 8.4057e+07 & 309.52 & 282.97 & 13.78 \\
& Ours     & \textbf{8.4052e+07} & 961.03 & \textbf{47.96}  & 3.33  \\
\hline

\multirow{5}{*}{0.3}
& Predictor & 8.6223e+07 & - & - & - \\
& EFS+      & 8.4076e+07 & 467.75 & 282.57 & 8.66  \\
& HFH+    & 8.4065e+07 & 3527.00 & 751.13 & 22.67 \\
& NCN      & 8.4077e+07 & 695.49 & 299.50 & 22.54 \\
& Ours     & \textbf{8.4062e+07} & 3848.20 & \textbf{45.38}  & 1.33  \\
\hline

\multirow{5}{*}{0.4}
& Predictor & 8.8209e+07 & - & - & - \\
& EFS+      & 8.4109e+07 & 631.67 & 297.27 & 13.66 \\
& HFH+    & 8.4101e+07 & 11512.25 & 758.16 & 29.89 \\
& NCN      & 8.4111e+07 & 1206.20 & 302.45 & 25.71 \\
& Ours     & \textbf{8.4100e+07} & 12684.42 & \textbf{45.29}  & 2.18  \\
\hline

\multirow{5}{*}{0.5}
& Predictor & 9.0897e+07 & - & - & - \\
& EFS+      & 8.4150e+07 & 1320.47 & 304.67 & 11.82 \\
& HFH+    & 8.4148e+07 & 12671.01 & 751.08 & 26.53 \\
& NCN      & 8.4152e+07 & 2503.70 & 305.95 & 10.98 \\
& Ours     & \textbf{8.4145e+07} & 17385.62 & \textbf{47.87}  & 2.04  \\
\hline
\end{tabular}
\caption{Performance comparison on Fashion-MNIST dataset with $k=10$ and varied $\alpha$.}
\label{tab:fashion_mnist_varied_alpha}
\end{table}

\begin{table}[H]
\centering
\footnotesize
\setlength{\tabcolsep}{1mm}
\begin{tabular}{
    l
    l
    l l
    l l
}
\hline
\multicolumn{2}{c}{Condition} & \multicolumn{2}{c}{Cost} & \multicolumn{2}{c}{Time(s)} \\
\cmidrule(lr){3-4} \cmidrule(lr){5-6}
{k} & {Method} & {Avg.} & {Std. Dev.} & {Avg.} & {Std. Dev.} \\
\hline

\multirow{6}{*}{10}
& KMed++  & 2.0224e+08 & - & - & - \\
& Predictor & 2.0427e+08 & - & - & - \\
& EFS+      & 2.0204e+08 & 4444.80 & 362.31 & 52.30 \\
& HFH+    & 2.0147e+08 & 105661.96 & 42.33  & 1.91 \\
& NCN       & 2.0163e+08 & 109131.95 & 160.15 & 14.33 \\
& Ours     & \textbf{2.0134e+08} & 82812.18 & \textbf{20.72}  & 0.59 \\
\hline

\multirow{6}{*}{30}
& KMed++  & 8.4404e+07 & - & - & - \\
& Predictor & 8.5018e+07 & - & - & - \\
& EFS+      & 8.4490e+07 & 721.59 & 294.21 & 54.13 \\
& HFH+    & 8.4404e+07 & 4372.98 & 42.26  & 2.05 \\
& NCN       & 8.4480e+07 & 14266.80 & 221.81 & 49.30 \\
& Ours     & \textbf{8.4404e+07} & 3043.38 & \textbf{27.08}  & 3.26 \\
\hline
\multirow{6}{*}{50}
& KMed++   & 6.2758e+07 & - & - & - \\
& Predictor & 6.3111e+07 & - & - & - \\
& EFS+     & 6.2796e+07 & 503.13 & 285.51 & 22.48 \\
& HFH+    & 6.2755e+07 & 1072.71 & 44.69  & 1.34 \\
& NCN     & 6.2791e+07 & 5662.71 & 208.89 & 29.26 \\
& Ours     & \textbf{6.275456e+07} & 677.03 & \textbf{36.87}  & 0.73 \\
\hline
\end{tabular}
\caption{Performance comparison on PHY dataset with fixed $\alpha=0.2$ and varied $k$.}
\label{tab:phy_varied_k_compressed}
\end{table}
\section{Acknowledgments}
This work was supported in part by  the NSFC through grants No. 62432016 and No. 62272432,  the National Key R\&D program of China through grant 2021YFA1000900, and the Provincial
NSF of Anhui through grant 2208085MF163. The authors would also like to thank the anonymous
reviewers for their valuable comments and suggestions.
\bibliography{aaai2026}

@inproceedings{10.1145/380752.380755,
author = {Arya, Vijay and Garg, Naveen and Khandekar, Rohit and Meyerson, Adam and Munagala, Kamesh and Pandit, Vinayaka},
title = {Local search heuristic for k-median and facility location problems},
year = {2001},
isbn = {1581133499},
publisher = {Association for Computing Machinery},
address = {New York, NY, USA},
url = {https://doi.org/10.1145/380752.380755},
doi = {10.1145/380752.380755},
booktitle = {Proceedings of the Thirty-Third Annual ACM Symposium on Theory of Computing},
pages = {21–29},
numpages = {9},
location = {Hersonissos, Greece},
series = {STOC '01}
}

@inproceedings{10.1007/978-3-642-31594-7_17,
author = {Charikar, Moses and Li, Shi},
title = {A dependent LP-rounding approach for the k-median problem},
year = {2012},
isbn = {9783642315930},
publisher = {Springer-Verlag},
address = {Berlin, Heidelberg},
url = {https://doi.org/10.1007/978-3-642-31594-7_17},
doi = {10.1007/978-3-642-31594-7_17},
booktitle = {Proceedings of the 39th International Colloquium Conference on Automata, Languages, and Programming - Volume Part I},
pages = {194–205},
numpages = {12},
location = {Warwick, UK},
series = {ICALP'12}
}

@inproceedings{
Ergun2021LearningAugmentedKC,
title={Learning-Augmented \$k\$-means Clustering},
author={Jon C. Ergun and Zhili Feng and Sandeep Silwal and David Woodruff and Samson Zhou},
booktitle={International Conference on Learning Representations},
year={2022},

}

@article{kiselev2017sc3,
  title={SC3: consensus clustering of single-cell RNA-seq data},
  author={Kiselev, Vladimir Y and Kirschner, Kristina and Schaub, Michael T and Andrews, Tallulah and Yiu, Andrew and Chandra, Tamir and Natarajan, Kedar N and Reik, Wolf and Barahona, Mauricio and Green, Anthony R and Hemberg, Martin},
  journal={Nature methods},
  volume={14},
  number={5},
  pages={483--486},
  year={2017},
  publisher={Nature Publishing Group}
}

@inproceedings{caron2020unsupervised,
  title={Unsupervised learning of visual features by swapping assignments between views},
  author={Caron, Mathilde and Misra, Ishan and Mairal, Julien and Goyal, Priya and Bojanowski, Piotr and Joulin, Armand},
  booktitle={Advances in Neural Information Processing Systems},
  volume={33},
  pages={9912--9924},
  year={2020}
}

@inproceedings{ghaffari2021clustering,
  title={Clustering and high-dimensional representation of social network users' behavior for bot detection},
  author={Ghaffari, Meysam and Mosavi, Ahmadreza and Shamshirband, Shahab},
  booktitle={Companion Proceedings of the Web Conference 2021},
  pages={19--22},
  year={2021}
}

@INPROCEEDINGS{8948668,
  author={Cohen-Addad, Vincent and C.S.Karthik},
  booktitle={2019 IEEE 60th Annual Symposium on Foundations of Computer Science (FOCS)}, 
  title={Inapproximability of Clustering in Lp Metrics}, 
  year={2019},
  volume={},
  number={},
  pages={519-539},
  doi={10.1109/FOCS.2019.00040}}

@inproceedings{10.1145/276698.276718,
author = {Arora, Sanjeev and Raghavan, Prabhakar and Rao, Satish},
title = {Approximation schemes for Euclidean k-medians and related problems},
year = {1998},
isbn = {0897919629},
publisher = {Association for Computing Machinery},
address = {New York, NY, USA},
url = {https://doi.org/10.1145/276698.276718},
doi = {10.1145/276698.276718},
booktitle = {Proceedings of the Thirtieth Annual ACM Symposium on Theory of Computing},
pages = {106–113},
numpages = {8},
location = {Dallas, Texas, USA},
series = {STOC '98}
}

@inproceedings{10.1145/509907.509947,
author = {Badoiu, Mihai and Har-Peled, Sariel and Indyk, Piotr},
title = {Approximate clustering via core-sets},
year = {2002},
isbn = {1581134959},
publisher = {Association for Computing Machinery},
address = {New York, NY, USA},
url = {https://doi.org/10.1145/509907.509947},
doi = {10.1145/509907.509947},
booktitle = {Proceedings of the Thiry-Fourth Annual ACM Symposium on Theory of Computing},
pages = {250–257},
numpages = {8},
location = {Montreal, Quebec, Canada},
series = {STOC '02}
}

@article{10.1145/3477541,
author = {Cohen-Addad, Vincent and Feldmann, Andreas Emil and Saulpic, David},
title = {Near-linear Time Approximation Schemes for Clustering in Doubling Metrics},
year = {2021},
issue_date = {December 2021},
publisher = {Association for Computing Machinery},
address = {New York, NY, USA},
volume = {68},
number = {6},
issn = {0004-5411},
url = {https://doi.org/10.1145/3477541},
doi = {10.1145/3477541},
journal = {J. ACM},
month = oct,
articleno = {44},
numpages = {34}
}

@article{10.1145/1667053.1667054,
author = {Kumar, Amit and Sabharwal, Yogish and Sen, Sandeep},
title = {Linear-time approximation schemes for clustering problems in any dimensions},
year = {2010},
issue_date = {January 2010},
publisher = {Association for Computing Machinery},
address = {New York, NY, USA},
volume = {57},
number = {2},
issn = {0004-5411},
url = {https://doi.org/10.1145/1667053.1667054},
doi = {10.1145/1667053.1667054},
journal = {J. ACM},
month = feb,
articleno = {5},
numpages = {32}
}

@inproceedings{10.5555/1283383.1283494,
author = {Arthur, David and Vassilvitskii, Sergei},
title = {k-means++: the advantages of careful seeding},
year = {2007},
isbn = {9780898716245},
publisher = {Society for Industrial and Applied Mathematics},
address = {USA},
booktitle = {Proceedings of the Eighteenth Annual ACM-SIAM Symposium on Discrete Algorithms},
pages = {1027–1035},
numpages = {9},
location = {New Orleans, Louisiana},
series = {SODA '07}
}

@article{10.1145/2395116.2395117,
author = {Ostrovsky, Rafail and Rabani, Yuval and Schulman, Leonard J. and Swamy, Chaitanya},
title = {The effectiveness of lloyd-type methods for the k-means problem},
year = {2013},
issue_date = {December 2012},
publisher = {Association for Computing Machinery},
address = {New York, NY, USA},
volume = {59},
number = {6},
issn = {0004-5411},
url = {https://doi.org/10.1145/2395116.2395117},
doi = {10.1145/2395116.2395117},
journal = {J. ACM},
month = jan,
articleno = {28},
numpages = {22}
}

@article{10.1145/3528087,
author = {Mitzenmacher, Michael and Vassilvitskii, Sergei},
title = {Algorithms with predictions},
year = {2022},
issue_date = {July 2022},
publisher = {Association for Computing Machinery},
address = {New York, NY, USA},
volume = {65},
number = {7},
issn = {0001-0782},
url = {https://doi.org/10.1145/3528087},
doi = {10.1145/3528087},
journal = {Commun. ACM},
month = jun,
pages = {33–35},
numpages = {3}
}

@inproceedings{10.5555/3495724.3496389,
author = {Antoniadis, Antonios and Gouleakis, Themis and Kleer, Pieter and Kolev, Pavel},
title = {Secretary and online matching problems with machine learned advice},
year = {2020},
isbn = {9781713829546},
publisher = {Curran Associates Inc.},
address = {Red Hook, NY, USA},
booktitle = {Proceedings of the 34th International Conference on Neural Information Processing Systems},
articleno = {665},
numpages = {12},
location = {Vancouver, BC, Canada},
series = {NIPS '20}
}

@inproceedings{Hsu2018LearningBasedFE,
  title={Learning-Based Frequency Estimation Algorithms},
  author={Chen-Yu Hsu and Piotr Indyk and Dina Katabi and A. Vakilian},
  booktitle={International Conference on Learning Representations},
  year={2018},

}

@inproceedings{10.5555/3540261.3541056,
author = {Dinitz, Michael and Im, Sungjin and Lavastida, Thomas and Moseley, Benjamin and Vassilvitskii, Sergei},
title = {Faster matchings via learned duals},
year = {2021},
isbn = {9781713845393},
publisher = {Curran Associates Inc.},
address = {Red Hook, NY, USA},
booktitle = {Proceedings of the 35th International Conference on Neural Information Processing Systems},
articleno = {795},
numpages = {14},
series = {NIPS '21}
}

@article{10.1145/3447579,
author = {Lykouris, Thodoris and Vassilvitskii, Sergei},
title = {Competitive Caching with Machine Learned Advice},
year = {2021},
issue_date = {August 2021},
publisher = {Association for Computing Machinery},
address = {New York, NY, USA},
volume = {68},
number = {4},
issn = {0004-5411},
url = {https://doi.org/10.1145/3447579},
doi = {10.1145/3447579},
journal = {J. ACM},
month = jul,
articleno = {24},
numpages = {25}
}

@InProceedings{pmlr-v178-gamlath22a,
  title = 	 {Approximate Cluster Recovery from Noisy Labels},
  author =       {Gamlath, Buddhima and Lattanzi, Silvio and Norouzi-Fard, Ashkan and Svensson, Ola},
  booktitle = 	 {Proceedings of Thirty Fifth Conference on Learning Theory},
  pages = 	 {1463--1509},
  year = 	 {2022},
  editor = 	 {Loh, Po-Ling and Raginsky, Maxim},
  volume = 	 {178},
  series = 	 {Proceedings of Machine Learning Research},
  month = 	 {02--05 Jul},
  publisher =    {PMLR},

  
}

@inproceedings{
Nguyen2022ImprovedLA,
title={Improved Learning-augmented Algorithms for k-means and k-medians Clustering},
author={Thy Dinh Nguyen and Anamay Chaturvedi and Huy Nguyen},
booktitle={The Eleventh International Conference on Learning Representations },
year={2023},
}

@inproceedings{huang2025new,
	title={New Algorithms for the Learning-Augmented k-means Problem},
	author={Junyu Huang and Qilong Feng and Ziyun Huang and Zhen Zhang and Jinhui Xu and Jianxin Wang},
	booktitle={The Thirteenth International Conference on Learning Representations},
	year={2025},

}

@inproceedings{10.1145/177424.178042,
author = {Inaba, Mary and Katoh, Naoki and Imai, Hiroshi},
title = {Applications of weighted Voronoi diagrams and randomization to variance-based k-clustering: (extended abstract)},
year = {1994},
isbn = {0897916484},
publisher = {Association for Computing Machinery},
address = {New York, NY, USA},
url = {https://doi.org/10.1145/177424.178042},
doi = {10.1145/177424.178042},
booktitle = {Proceedings of the Tenth Annual Symposium on Computational Geometry},
pages = {332–339},
numpages = {8},
location = {Stony Brook, New York, USA},
series = {SCG '94}
}

@Techreport{krizhevsky2009learning,
 author = {Krizhevsky, Alex and Hinton, Geoffrey},
 address = {Toronto, Ontario},
 institution = {University of Toronto},
 number = {0},
 publisher = {Technical report, University of Toronto},
 title = {Learning multiple layers of features from tiny images},
 year = {2009},
 title_with_no_special_chars = {Learning multiple layers of features from tiny images},
 url = {https://www.cs.toronto.edu/~kriz/learning-features-2009-TR.pdf}
}

@misc{kddcup2004,
  author       = {KDD},
  title        = {KDD Cup},
  howpublished = {\url{https://osmot.cs.cornell.edu/kddcup/index.html}},
  year         = {2004},

}

@ARTICLE{6296535,
  author={Deng, Li},
  journal={IEEE Signal Processing Magazine}, 
  title={The MNIST Database of Handwritten Digit Images for Machine Learning Research [Best of the Web]}, 
  year={2012},
  volume={29},
  number={6},
  pages={141-142},
  doi={10.1109/MSP.2012.2211477}}

@article{fashion_mnist,
  title={Fashion-MNIST: a Novel Image Dataset for Benchmarking Machine Learning Algorithms},
  author={Han Xiao and Kashif Rasul and Roland Vollgraf},
  journal={ArXiv},
  year={2017},
  volume={abs/1708.07747},
  url={https://api.semanticscholar.org/CorpusID:702279}
}

@book{Roughgarden_2021,author={Tim Roughgarden}, place={Cambridge}, title={Beyond the Worst-Case Analysis of Algorithms}, publisher={Cambridge University Press}, year={2021}}

\appendix
\section{missing proof for $k$-median}
\begin{claim}
\label{distanceinequalityforkmed}
The two distances $||\hat{c}_i-\mathtt{Med}(T_i)||_2$ and $||\hat{c}_i-c_i^*||_2$ are both no larger than $\frac{(2+\alpha\epsilon)\mathtt{Cost}(X_i^*, c_i^*)}{(1-2\alpha)|\tilde{\tilde{X_i}}|}$.
\end{claim}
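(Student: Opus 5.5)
The plan is a single averaging argument over the correctly labeled points that $\hat{c}_i$ "captures", run once with $m=\mathtt{Med}(T_i)$ and once with $m=c_i^*$. Throughout I condition on the success event of Lemma~\ref{lmcandidateset}, so there is a candidate $q\in C_i$ with $\|q-\mathtt{Med}(T_i)\|_2\le \alpha\epsilon\,\mathtt{Cost}(T_i,\mathtt{Med}(T_i))/|T_i|$.

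\textbf{Step 1: a large overlap.} Write $N=N_i(\hat{c}_i)$. Both $N$ and $T_i$ are subsets of $\tilde{X_i}$, with $|N|=\lceil(1-\alpha)|\tilde{X_i}|\rceil$ and $|T_i|\ge(1-\alpha)|\tilde{X_i}|$ by \Cref{def-la}. Inclusion–exclusion therefore gives
\begin{align*}
|N\cap T_i|\ge |N|+|T_i|-|\tilde{X_i}|\ge(1-2\alpha)|\tilde{X_i}|,
\end{align*}
which is positive since $\alpha<1/2$.

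\textbf{Step 2: triangle inequality and averaging.} For every $x\in N\cap T_i$ we have $\|\hat{c}_i-m\|_2\le\|x-\hat{c}_i\|_2+\|x-m\|_2$. Summing over $N\cap T_i$ yields
\begin{align*}
(1-2\alpha)|\tilde{X_i}|\,\|\hat{c}_i-m\|_2\le \mathtt{Cost}(N,\hat{c}_i)+\mathtt{Cost}(T_i,m).
\end{align*}

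\textbf{Step 3: bounding $\mathtt{Cost}(N,\hat{c}_i)$ by greedy optimality.} This is the step that needs care. The greedy rule gives $\mathtt{Cost}(N,\hat{c}_i)\le\mathtt{Cost}(N_i(q),q)$. Since $N_i(q)$ consists of the $\lceil(1-\alpha)|\tilde{X_i}|\rceil$ points of $\tilde{X_i}$ nearest to $q$, its cost to $q$ is at most that of any other subset of $\tilde{X_i}$ of the same size. In particular it is at most the cost of such a subset taken inside $T_i$, which exists because $|T_i|\ge\lceil(1-\alpha)|\tilde{X_i}|\rceil$, as $|T_i|$ is an integer. Hence
\begin{align*}
\mathtt{Cost}(N,\hat{c}_i)\le\mathtt{Cost}(T_i,q)\le\mathtt{Cost}(T_i,\mathtt{Med}(T_i))+|T_i|\,\|q-\mathtt{Med}(T_i)\|_2\le(1+\alpha\epsilon)\,\mathtt{Cost}(T_i,\mathtt{Med}(T_i)),
\end{align*}
where the last inequality uses Lemma~\ref{lmcandidateset}.

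\textbf{Step 4: conclusion.} By optimality of the median and $T_i\subseteq X_i^*$,
\begin{align*}
\mathtt{Cost}(T_i,\mathtt{Med}(T_i))\le\mathtt{Cost}(T_i,c_i^*)\le\mathtt{Cost}(X_i^*,c_i^*).
\end{align*}
Each of $\mathtt{Cost}(T_i,\mathtt{Med}(T_i))$ and $\mathtt{Cost}(T_i,c_i^*)$ is thus at most $\mathtt{Cost}(X_i^*,c_i^*)$. So for either choice of $m$, the right-hand side in Step 2 is at most $(2+\alpha\epsilon)\mathtt{Cost}(X_i^*,c_i^*)$. Dividing by $(1-2\alpha)|\tilde{X_i}|$ gives both claimed bounds.

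I expect Step 3 to be the main obstacle. One has to notice that the greedy comparison should go through $N_i(q)$, which has minimal cost among all equal-size subsets of $\tilde{X_i}$, rather than trying to compare $N$ with $T_i$ directly. One also has to check the size and rounding condition that places a subset of that size inside $T_i$.
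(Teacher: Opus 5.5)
Your proposal is correct and follows essentially the same route as the paper. Both use the overlap bound $|N_i(\hat{c}_i)\cap T_i|\ge(1-2\alpha)|\tilde{X_i}|$, triangle-inequality averaging over the captured correctly labeled points, and the greedy chain $\mathtt{Cost}(N_i(\hat{c}_i),\hat{c}_i)\le\mathtt{Cost}(N_i(q),q)\le\mathtt{Cost}(T_i,q)\le(1+\alpha\epsilon)\mathtt{Cost}(T_i,\mathtt{Med}(T_i))$. The differences are cosmetic: you average over $N_i(\hat{c}_i)\cap T_i$ for both choices of $m$, whereas the paper uses $N_i(\hat{c}_i)\cap X_i^*$ for the $c_i^*$ bound, and you explicitly justify the ceiling step that the paper glosses over.
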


\begin{proof}
   First, due to our greedy search approach, we can easily establish the inequality relationship between $\mathtt{Cost}(N_i(\hat{c}_i), \hat{c}_i)$ and $\mathtt{Cost}(T_{i}, \mathtt{Med}(T_{i}))$, we have
\begin{align}
    \mathtt{Cost}(N_i(\hat{c}_i), \hat{c}_i)\leqslant\mathtt{Cost}(N_i(q), q). \nonumber
\end{align}
As $N_i(q)$ is the set of the nearest points from $\hat{c}_i$  of size $ (1-\alpha)|\tilde{X_i}|$, we also have
\begin{align}
    \mathtt{Cost}(N_i(q), q)\leqslant\mathtt{Cost}(T_i, q). \nonumber
\end{align}
We have
\begin{align}
\mathtt{Cost}(T_i, q)\leqslant(1+\alpha\epsilon)\mathtt{Cost}(T_{i}, \mathtt{Med}(T_{i})).   \nonumber
\end{align}
From the inequalities presented above, it follows that
\begin{align}
\mathtt{Cost}(N_i(\hat{c}_i), \hat{c}_i)\leqslant(1+\alpha\epsilon)\mathtt{Cost}(T_{i}, \mathtt{Med}(T_{i})). \label{eq1}
\end{align}
   According to the definition of learing-augmented, $|T_i|\geqslant (1-\alpha)n_i$, and because $|N_i(\hat{c}_i)|=(1-\alpha)n_i$ and $N_i(\hat{c}_i), T_i\subseteq \tilde{X_i}$, We can derive 
\begin{align}
    |N_i(\hat{c}_i)\cap T_i|\geqslant|N_i(\hat{c}_i)|-|\tilde{X_i}\backslash T_i|&\geqslant(1-\alpha-\alpha)|\tilde{X_i}|\nonumber\\&=(1-2\alpha)|\tilde{X_i}|. \label{eq2}
\end{align}
Then, according to the triangle inequality and inequality \eqref{eq2} it follows that
\begin{align}
\label{eq3}
    &\sum_{p\in N_i(\hat{c}_i)\cap T_{i}}||p-\hat{c}_i||_{2}+||p-\mathtt{Med}(T_i)||_{2}\\\nonumber&\geqslant|N_i(\hat{c}_i)\cap T_i|\times||\hat{c}_i-T_i||_{2}\nonumber\\&\geqslant(1-2\alpha)|\tilde{X_i}|\times||\hat{c}_i-\mathtt{Med}(T_i)||_{2}.
\end{align}
Also by inequality \eqref{eq1}, we have 
\begin{align}
\sum_{p\in N_i(\hat{c}_i)\cap T_{i}}||p-\hat{c}_i||_{2}&\leqslant\mathtt{Cost}(N_i(\hat{c}_i), \hat{c}_i)\nonumber\\&\leqslant(1+\alpha\epsilon)\mathtt{Cost}(X_i^*, c_i^*)\label{eq20}
\end{align}
\begin{align}
\sum_{p\in N_i(\hat{c}_i)\cap T_{i}}||p-\mathtt{Med}(T_i)||_{2}\leqslant\mathtt{Cost}(X_i^*, c_i^*) \label{eq25}
\end{align}
Comprehensive inequalities \eqref{eq3}, \eqref{eq20}, and \eqref{eq25}, we obtain
\begin{align}
(2+\alpha\epsilon)\mathtt{Cost}(X_i^*, c_i^*)\geqslant(1-2\alpha)|\tilde{X_i}|||\hat{c}_i-\mathtt{Med}(T_i)||_{2}, 
\end{align}
which directly implies 
\begin{align}
||\hat{c}_i-\mathtt{Med}(T_i)||_2\leqslant \frac{(2+\alpha\epsilon)\mathtt{Cost}(X_i^*, c_i^*)}{(1-2\alpha)|\tilde{X_i}|}. 
\end{align}
Here we have completed the proof of the first inequality . 

The proof of the second inequality is similar, like inequality \eqref{eq3}, we have
\begin{align}
&\sum_{p\in N_i(\hat{c}_i)\cap X_{i}^*}||p-\hat{c}_i||_{2}+||p-\mathtt{Med}(X_{i}^*)||_{2}\nonumber\\&\geqslant|N_i(\hat{c}_i)\cap X_{i}^*|\times||\hat{c}_i-\mathtt{Med}(X_{i}^*)||_{2}\nonumber\\&\geqslant(1-2\alpha)|X_{i}|\times||\hat{c}_i-\mathtt{Med}(X_{i}^*)||_{2}, 
\end{align} 
the second inequality comes from $T_i\subseteq X_i^*$, so $|N_i(\hat{c}_i)\cap X_{i}^*|\geqslant|N_i(\hat{c}_i)\cap T_{i}^{*}|. $
Beacuse 
\begin{align}\sum_{p\in N_i(\hat{c}_i)\cap X_{i}^{*}}||p-\hat{c}_i||_{2}&\leqslant\mathtt{Cost}(N_i(\hat{c}_i), \hat{c}_i)\nonumber\\&\leqslant(1+\alpha\epsilon)\mathtt{Cost}(X_i^*, c_i^*)\nonumber
\end{align}
\begin{align}
&\sum_{p\in N_i(\hat{c}_i)\cap X_{i}^{*}}||p-\mathtt{Med}(X_{i}^{*})||_{2}\leqslant\mathtt{Cost}(X_i^*, c_i^*), \nonumber
\end{align}
we have
\begin{align}
||\hat{c}_i-c_i^*||_2\leqslant\frac{(2+\alpha\epsilon)\mathtt{Cost}(X_i^*, c_i^*)}{(1-2\alpha)|\tilde{X_i}|}. \nonumber
\end{align}
This is the second inequality we aimed to prove. 
\end{proof}

\section{Algorithm for $k$-means}
\textbf{Notations.} For k-means, given any point set C, the distance from a point p to its
closest point in C is denoted as $\mathtt{dist}_2(p,C)$. In particular, when C is the given set of centers for the point set X, the corresponding cost, denoted as $\mathtt{Cost}_2(X, C)$, is defined as $\mathtt{Cost}_2(X, C) = \sum_{x\in X} \mathtt{dist}_2(x, C).$ For any point set $P$ and we use $\mathtt{Cen}(P)$ to denote its \textbf{means point}, i.e.,
\begin{eqnarray}
\mathtt{Cen}(P)=\arg\min_{q\in \mathbb{R}^d}\sum_{p\in P}\|p-q\|^2_2.\nonumber
\end{eqnarray}

In this section, we extend the Sample-and-Search algorithm to Learning-augmented $k$-means problem. Our approach still proceeds in three stages, but differs from algorithm 1 in two aspects: we modify the number of sampled points and the method for building the candidate center set. Specifically, we first sample a constant-size set of data points, then construct candidate center sets in time exponential in the sample size, and finally identify locally optimal centers in time linear in the dataset size. 
\subsection{Our Proposed Algorithm And Main Theorem}
The detailed implementation of the algorithm is described in algorithm \ref{algorithm 2}. Table \ref{t1} provides a detailed comparison of results for Learning-Augmented $k$-means algorithms. 
\begin{table*}[htbp]
	\centering
	\caption{Comparison results of learning augmented $k$-means algorithms}
	\label{t1}
	\resizebox{\textwidth}{!}{%
		\begin{tabular}{lccc}
			\toprule
			Methods and References   & Approximation Ratio &  Label Error Range   & Time Complexity\\
			\midrule
			\citet{Ergun2021LearningAugmentedKC} & $1+20\alpha$ & $[\frac{10\log m}{\sqrt{m}}, 1/7]$ & $O(nd\log n)$\\
			\citet{Nguyen2022ImprovedLA} & $1+\frac{\alpha}{1-\alpha}+\frac{4\alpha}{(1-2\alpha)(1-\alpha)}$ & $[0, 1/2)$ & $O(nd\log n)$ \\
			\citet{huang2025new}  & $1+\frac{\alpha}{1-\alpha}+\frac{4\alpha+\alpha\epsilon}{(1-2\alpha)(1-\alpha)}$ & $[0, 1/2)$ & $O(\epsilon^{-1/2}nd\log(kd))$ \\
	    \citet{huang2025new} & $1+\frac{\alpha}{1-\alpha}+\frac{12\alpha-18\alpha^2}{(1-3\alpha-\epsilon)(1-2\alpha-\epsilon)}$ & $(0, 1/3-\epsilon)$ & $O(nd)+\tilde{O}(\epsilon^{-5}kd)$\\
			Sample-and-Search (ours) & $1+\frac{\alpha}{1-\alpha}+\frac{4\alpha+\alpha\epsilon}{(1-2\alpha)(1-\alpha)}$ & $[0, 1/2)$ & $O(2^{O(1/\epsilon)}nd\log k)$\\
			\bottomrule
		\end{tabular}%
	}
\end{table*}
\begin{algorithm}[htbp]
	\caption{Sample-and-Search for Learning-Augmented $k$-means}
	\label{algorithm 2}
        \begin{algorithmic}[1] 
	\STATE \textbf{Input:}A $k$-means instance $(X, k, d)$, a set $(\tilde{X_1}\ldots \tilde{X_k})$ of partitions with error rate $\alpha$, and a parameter $\epsilon\in (0, 1)$
	\STATE \textbf{Output:}A set $C \in \mathbb{R}^d $ of centers with $|C| = k$. 
	\STATE $\hat{C}\gets \{\}$\;
	\FOR{$i\in [k]$ }
		\STATE ${C}_i\gets\{\}$\;
		\FOR{$j=1$ to $\lceil\frac{\log(\delta/k)}{\log 0.75}\rceil$}
		\STATE Randomly and independently sample a set ${R}^j_i$ from $\tilde{X_i}$ with size $\lceil\frac{4}{(1-\alpha)\epsilon}\rceil$\;
			\FOR { every $\lceil\frac{1}{(1-\alpha)\epsilon}\rceil$ subset ${R}$ of ${R}^j_i$ }
            \STATE ${C}_i={C}_i\cup \mathtt{Cen}_2(R)$
                \ENDFOR
		\ENDFOR
		\STATE For each $c \in {C}_i$, define $N_i(c)$ as the set of $\lceil(1-\alpha)m_i\rceil$ points in $\tilde{X_i}$ closest to $c$.
		\STATE $\hat{c}_i=\arg\min_{c\in C^{'}}\mathtt{Cost}_2(N_i(c), c)$\;
		\STATE $\hat{C}=\hat{C}\cup \mathtt{Cen}(N_i(\hat{c}_i))$
		\ENDFOR
	\STATE \textbf{Return} $\hat{C}$
    \end{algorithmic}
\end{algorithm}
We present the main theoretical result of our algorithm
below.
\begin{theorem}
\label{ag2}
   Algorithm \ref{algorithm 2} is an algorithm for Learning-Augmented $k$-means clustering. Given a dataset $X \in \mathbb{R}^{n \times d}$ and a partition $(X_1, \ldots, X_k)$ with an error rate of $\alpha < \frac{1}{2}$, the algorithm outputs a solution with an approximation ratio of
$1+\frac{\alpha}{1-\alpha}+\frac{4\alpha+\alpha\epsilon}{(1-2\alpha)(1-\alpha)}$
and completes with constant probability within $O(2^{O(1/\varepsilon)} nd \log k)$ time complexity. 
\end{theorem}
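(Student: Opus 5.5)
We follow the $k$-median analysis, replacing the median/subspace machinery with the Inaba et al.\ sampling lemma for centroids. Fix a predicted cluster $\tilde{X_i}$ and let $T_i=\tilde{X_i}\cap X_i^*$. By \Cref{def-la}, $|T_i|\geqslant(1-\alpha)|\tilde{X_i}|$.

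\textbf{Step 1 (good candidate exists).} Draw $R_i^j$ of size $\lceil\frac{4}{(1-\alpha)\epsilon}\rceil$ uniformly from $\tilde{X_i}$. The expected number of points landing in $T_i$ is at least $4/\epsilon$. By Markov's inequality, applied as in the proof of Lemma~\ref{lmcandidateset}, at least $\lceil\frac{1}{(1-\alpha)\epsilon}\rceil$ of them lie in $T_i$ with constant probability. Conditioned on this, these points are a uniform sample $R$ from $T_i$.

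Inaba et al.'s lemma (a uniform sample of size $m$ satisfies $\mathbb{E}\|\mathtt{Cen}(R)-\mathtt{Cen}(T_i)\|^2=\frac{1}{m}\mathtt{Cost}_2(T_i,\mathtt{Cen}(T_i))/|T_i|$), together with Markov, yields
\[
\mathtt{Cost}_2(T_i,\mathtt{Cen}(R))\leqslant(1+\alpha\epsilon)\,\mathtt{Cost}_2(T_i,\mathtt{Cen}(T_i))
\]
with constant probability. Here the subset size must be tuned so that the sample has $\Theta(1/(\alpha\epsilon))$ points of $T_i$; otherwise we settle for $1+\epsilon$ and rescale. Since the algorithm enumerates all subsets of $R_i^j$, it tries $R$. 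We take the per-trial success probability to be at least $1/4$. Then $\lceil\log(\delta/k)/\log 0.75\rceil$ repetitions give failure probability at most $\delta/k$ per cluster, and a union bound over clusters gives constant overall success.

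\textbf{Step 2 (greedy selection and cost transfer).} Let $q$ be the good candidate. By greedy selection,
\[
\mathtt{Cost}_2(N_i(\hat c_i),\hat c_i)\leqslant\mathtt{Cost}_2(N_i(q),q)\leqslant\mathtt{Cost}_2(T_i,q)\leqslant(1+\alpha\epsilon)\mathtt{Cost}_2(T_i,\mathtt{Cen}(T_i)).
\]
The output center is $c'=\mathtt{Cen}(N_i(\hat c_i))$, which can only lower the cost on $N_i(\hat c_i)$. We then follow the known $k$-means analysis of \citet{Nguyen2022ImprovedLA}. Split $\mathtt{Cost}_2(X_i^*,c')=\mathtt{Cost}_2(T_i,c')+\mathtt{Cost}_2(X_i^*\setminus T_i,c')$ and use
\[
\mathtt{Cost}_2(P,c)=\mathtt{Cost}_2(P,\mathtt{Cen}(P))+|P|\,\|c-\mathtt{Cen}(P)\|^2.
\]
It therefore suffices to bound $\|c'-\mathtt{Cen}(T_i)\|^2$. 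Mirroring Claim~\ref{distanceinequalityforkmed}, we have $|N_i(\hat c_i)\cap T_i|\geqslant(1-2\alpha)|\tilde{X_i}|$. The centroid of $N_i(\hat c_i)$ is a weighted average of the centroids of its $T_i$ part and its foreign part, and the foreign part has at most $\alpha|\tilde X_i|$ points. Using the $(1+\alpha\epsilon)$ cost bound and Cauchy--Schwarz, we aim to show
\[
|T_i|\,\|c'-\mathtt{Cen}(T_i)\|^2\leqslant\frac{4\alpha+\alpha\epsilon}{(1-2\alpha)}\cdot\frac{\mathtt{Cost}_2(X_i^*,c_i^*)}{1-\alpha}.
\]
The points of $X_i^*\setminus T_i$, at most $\frac{\alpha}{1-\alpha}|T_i|$ of them, contribute the extra $\frac{\alpha}{1-\alpha}$ term. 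This is via $\mathtt{Cost}_2(X_i^*,\cdot)$ decomposed around $c_i^*$, together with $|T_i|\,\|\mathtt{Cen}(T_i)-c_i^*\|^2\leqslant\frac{\alpha}{1-\alpha}\mathtt{Cost}_2(X_i^*,c_i^*)$. Summing over $i$ gives the ratio.

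\textbf{Main obstacle.} The delicate part is Step 2's constant-chasing. We must show that the greedy-chosen neighborhood $N_i(\hat c_i)$, which may contain up to $\alpha|\tilde X_i|$ foreign points, has centroid within the claimed squared distance of $\mathtt{Cen}(T_i)$, using only the cost bound $(1+\alpha\epsilon)$ rather than knowledge of which points are foreign. We would first try the standard ``remove $\alpha$ fraction of points'' lemma: removing a $\beta$-fraction of $P$ moves the centroid by at most $\sqrt{\frac{\beta}{1-\beta}\mathtt{Cost}_2(P)/|P|}$. We would apply it twice, once to $T_i\to N_i\cap T_i$ and once to $N_i\to N_i\cap T_i$.

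\textbf{Running time.} There are $\binom{O(1/\epsilon)}{O(1/\epsilon)}=2^{O(1/\epsilon)}$ candidates per trial, times $O(\log k)$ trials. Each candidate is evaluated in $O(|\tilde X_i|d)$ time via linear-time selection of the nearest $(1-\alpha)|\tilde X_i|$ points. Summing over clusters gives $O(2^{O(1/\epsilon)}nd\log k)$.
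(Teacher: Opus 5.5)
Your Step~1, the greedy chain, and your plan to apply Proposition~\ref{divide} twice (once to $N_i(\hat c_i)\to N_i(\hat c_i)\cap T_i$, once to $T_i\to N_i(\hat c_i)\cap T_i$) match the paper's Lemma~\ref{lmecandidateset} and Claim~\ref{disineforkmea}. The genuine gap is at the end of Step~2, where you combine the bounds. Write $E=\|c'-\mathtt{Cen}(T_i)\|_2$ and $D=\|\mathtt{Cen}(T_i)-c_i^*\|_2$. Whatever split you use, $\mathtt{Cost}_2(X_i^*,c')-\mathtt{Cost}_2(X_i^*,c_i^*)=|X_i^*|\,\|c'-c_i^*\|_2^2\le|X_i^*|(E+D)^2$, with equality in the collinear case. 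So bounding $E$ alone does not suffice; the $X_i^*\setminus T_i$ part of your split also depends on $D$. You bound $E^2$ and $D^2$ separately, each against the full $\mathtt{Cost}_2(X_i^*,c_i^*)$. With only those two bounds, the cross term $2|X_i^*|ED$ can only be controlled by about $2\sqrt{K\alpha/(1-\alpha)}\,\mathtt{Cost}_2(X_i^*,c_i^*)$, where $K=\frac{4\alpha+\alpha\epsilon}{(1-2\alpha)(1-\alpha)}$. For small $\alpha$ that is roughly $4\alpha$ times the optimum, the same order as the main terms. So ``summing'' yields a ratio like $1+(\sqrt K+\sqrt{\alpha/(1-\alpha)})^2$, not the additive one stated.

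The missing idea, which is the heart of Lemma~\ref{thapproximationratioforkmea}, is to charge $E$ and $D$ to disjoint parts of the optimal cost.
First, bound $|T_i|E^2\le\frac{4\alpha+\alpha\epsilon}{1-2\alpha}\mathtt{Cost}_2(T_i,\mathtt{Cen}(T_i))$, against the within-$T_i$ cost only. To avoid an extra factor $1/(1-\alpha)$ here, use $|N_i(\hat c_i)\setminus T_i|\le|\tilde X_i|-|T_i|$, so that the two removed fractions trade off.
Second, decompose $\mathtt{Cost}_2(X_i^*,c_i^*)$ around the centroids of $T_i$ and $X_i^*\setminus T_i$. This gives $\mathtt{Cost}_2(X_i^*,c_i^*)\ge\frac{1-\alpha}{\alpha}|X_i^*|D^2+\mathtt{Cost}_2(T_i,\mathtt{Cen}(T_i))$.
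Third, set $w_1=\frac{1-\alpha}{\alpha}$ and $w_2=\frac{(1-2\alpha)(1-\alpha)}{4\alpha+\alpha\epsilon}$. Using $|X_i^*|\le|T_i|/(1-\alpha)$, you get $|X_i^*|(w_1D^2+w_2E^2)\le\mathtt{Cost}_2(X_i^*,c_i^*)$.
Finally, the weighted Cauchy--Schwarz inequality $(D+E)^2\le(w_1^{-1}+w_2^{-1})(w_1D^2+w_2E^2)$ gives exactly the coefficient $\frac{\alpha}{1-\alpha}+\frac{4\alpha+\alpha\epsilon}{(1-2\alpha)(1-\alpha)}$. Your inequality $|T_i|D^2\le\frac{\alpha}{1-\alpha}\mathtt{Cost}_2(X_i^*,c_i^*)$ is true, but it is too weak for this because it spends the whole optimal cost.

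A secondary issue is your Step~1 claim that $\mathtt{Cost}_2(T_i,\mathtt{Cen}(R))\le(1+\alpha\epsilon)\mathtt{Cost}_2(T_i,\mathtt{Cen}(T_i))$. The algorithm enumerates subsets of size $\Theta(1/\epsilon)$, so Inaba et al.\ plus Markov only give $1+O(\epsilon)$. Your suggested remedy, subsets of size $\Theta(1/(\alpha\epsilon))$ or rescaling $\epsilon$ by $\alpha$, would make the running time $2^{O(1/(\alpha\epsilon))}$, contradicting the statement. Fortunately no remedy is needed. The output is $\mathtt{Cen}(N_i(\hat c_i))$, so the greedy bound $\mathtt{Cost}_2(N_i(\hat c_i),\mathtt{Cen}(N_i(\hat c_i)))\le(1+\epsilon)\mathtt{Cost}_2(T_i,\mathtt{Cen}(T_i))$ enters only through the shift $\|\mathtt{Cen}(N_i(\hat c_i))-\mathtt{Cen}(N_i(\hat c_i)\cap T_i)\|_2^2$. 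There Proposition~\ref{divide} multiplies it by $|N_i(\hat c_i)\setminus T_i|/|N_i(\hat c_i)\cap T_i|\le\alpha/(1-2\alpha)$. A $(1+O(\epsilon))$-accurate candidate therefore already produces the $\alpha\epsilon$ term, after rescaling $\epsilon$ by a constant.
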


\subsection{Proof of Theorem \ref{ag2}}

We first introduce two well-known and widely used results in the field of k-means clustering. 
\begin{proposition} 
\label{traforkmea}
Let $X \subseteq \mathbb{R}^d$ be a set of $n$ points, and $c \in \mathbb{R}^d$. Then:
\begin{align}
    \textup{Cost}_2(X, c) = \textup{Cost}_2(X, \textup{Cen}_2(X)) + n \cdot \|c - \textup{Cen}_2(X)\|_2^2. \nonumber
\end{align}
\end{proposition}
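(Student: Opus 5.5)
The statement immediately above is Proposition~\ref{traforkmea}, the standard parallel-axis (bias--variance) identity for squared Euclidean cost. The plan is a direct algebraic expansion around the centroid. Write $\mu = \mathtt{Cen}(X) = \frac{1}{n}\sum_{x\in X} x$. This is the minimizer of $\sum_{x}\|x-q\|_2^2$, since the objective is a strictly convex quadratic in $q$ whose gradient $2\sum_x (q-x)$ vanishes exactly at the mean. So the first step is to identify $\mathtt{Cen}_2(X)$ with the arithmetic mean.

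Next, for each $x\in X$ decompose $x - c = (x-\mu) + (\mu - c)$ and expand the squared norm:
\begin{align*}
\|x-c\|_2^2 = \|x-\mu\|_2^2 + 2\langle x-\mu,\ \mu-c\rangle + \|\mu-c\|_2^2 .
\end{align*}
Summing over $x\in X$, the first terms give $\mathtt{Cost}_2(X,\mathtt{Cen}_2(X))$. The last terms give $n\|c-\mu\|_2^2$.

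The cross term is $2\langle \sum_{x\in X}(x-\mu),\ \mu-c\rangle$, and it vanishes because $\sum_x (x-\mu) = n\mu - n\mu = 0$ by the definition of the mean. This is the only step requiring any thought, and it is immediate once the centroid is identified with the mean. Combining the three sums yields the claimed identity.
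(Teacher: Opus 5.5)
Your proof is correct and complete. You identify $\mathtt{Cen}_2(X)$ with the arithmetic mean $\mu$ using the vanishing gradient $2\sum_{x\in X}(q-x)$ of the strictly convex objective, then expand $\|x-c\|_2^2$ around $\mu$, and the cross term drops out because $\sum_{x\in X}(x-\mu)=0$. The paper gives no proof and cites this identity as well known; your expansion is the standard argument it relies on.
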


\begin{proposition} 
\label{divide}
For any arbitrary partition $X_1 \cup X_2$ of a set $X \subseteq \mathbb{R}^d$, where $X$ has size $n$, if $|X_1| \ge (1-\lambda)n$, then:
\begin{align}
    \|\mathtt{Cen}_2(X), \mathtt{Cen}_2(X_1)\|_2^2 \le \frac{\lambda}{(1-\lambda)n} \mathtt{Cost}_2(X, \mathtt{Cen}_2(X)). \nonumber
\end{align}
\end{proposition}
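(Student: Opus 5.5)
The plan is to reduce everything to Proposition~\ref{traforkmea}, applied to each part of the partition, together with the fact that the centroid of $X$ is the size-weighted average of the centroids of $X_1$ and $X_2$. Write $n_1=|X_1|$, $n_2=|X_2|$, $\mu=\mathtt{Cen}_2(X)$, $\mu_1=\mathtt{Cen}_2(X_1)$, $\mu_2=\mathtt{Cen}_2(X_2)$. If $n_2=0$ then $\mu_1=\mu$ and there is nothing to prove, so assume $n_2>0$. By hypothesis $n_1\ge(1-\lambda)n$, and hence $n_2\le\lambda n$.

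First, lower-bound the cost. Split $\mathtt{Cost}_2(X,\mu)=\mathtt{Cost}_2(X_1,\mu)+\mathtt{Cost}_2(X_2,\mu)$. Applying Proposition~\ref{traforkmea} to each part with the point $c=\mu$ gives
\begin{align}
\mathtt{Cost}_2(X,\mu)\ \ge\ n_1\|\mu_1-\mu\|_2^2+n_2\|\mu_2-\mu\|_2^2,\nonumber
\end{align}
after dropping the nonnegative terms $\mathtt{Cost}_2(X_j,\mu_j)$.

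Second, relate the two displacement vectors. Since $n\mu=n_1\mu_1+n_2\mu_2$, we have $n_1(\mu_1-\mu)+n_2(\mu_2-\mu)=0$, that is, $\mu_2-\mu=-\frac{n_1}{n_2}(\mu_1-\mu)$. Substituting this into the second term and using $n_1+n_2=n$ yields
\begin{align}
\mathtt{Cost}_2(X,\mu)\ \ge\ n_1\Big(1+\frac{n_1}{n_2}\Big)\|\mu_1-\mu\|_2^2=\frac{n_1 n}{n_2}\|\mu_1-\mu\|_2^2 .\nonumber
\end{align}

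Finally, rearrange to get $\|\mu-\mu_1\|_2^2\le \frac{n_2}{n_1 n}\mathtt{Cost}_2(X,\mu)$, and insert $n_2\le\lambda n$ and $n_1\ge(1-\lambda)n$. This gives exactly $\frac{\lambda}{(1-\lambda)n}\mathtt{Cost}_2(X,\mu)$.

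The argument is elementary. The only step needing care is the second one, where the $X_2$-term must be re-expressed through $\mu_1-\mu$ rather than simply discarded. Discarding it would lose the crucial factor $n/n_2$ and leave only the much weaker bound $\|\mu-\mu_1\|_2^2\le\mathtt{Cost}_2(X,\mu)/n_1$.
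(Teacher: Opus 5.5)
Your proof is correct and complete. The paper gives no proof of this proposition: it lists it, together with Proposition~\ref{traforkmea}, as a ``well-known'' fact and uses it without derivation. So there is no competing argument to compare against, and your write-up supplies the standard derivation the paper omits.

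Your argument has two steps. First, you apply the parallel-axis identity to each part at the common point $\mu$. Second, you use the weighted-centroid relation $n_1(\mu_1-\mu)+n_2(\mu_2-\mu)=0$ to turn the $X_2$-term into a multiple of $\|\mu_1-\mu\|_2^2$. This yields the slightly sharper intermediate bound $\|\mu-\mu_1\|_2^2\le \frac{n_2}{n_1 n}\mathtt{Cost}_2(X,\mu)$, which is stronger than the stated one.

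The only assumption you leave implicit is $\lambda<1$, which the statement already needs for its right-hand side to make sense. It guarantees $n_1\ge(1-\lambda)n>0$, so $\mu_1$ is defined and dividing by $n_1$ is legitimate.
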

We also introduce a important propositions on geometric means point in Euclidean space, which are essential for our following proofs.
\begin{proposition}\cite{10.1145/177424.178042}
\label{lmforkme}
Let $S$ be a set of $m$ points obtained by independently sampling $m$ points uniformly at random from a point set $P$. Then, for any $\delta > 0$, 
\begin{align}
   ||\textup{Cen}_2(P)-\textup{Cen}_2(T_i)||_2 \leqslant \frac{ \epsilon \mathtt{Cost}_2(T_i, c(T_i))}{ |T_i|}. \nonumber
\end{align}\nonumber
holds with probability at least $1-\delta$. 
\end{proposition}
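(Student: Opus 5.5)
We read the statement as the classical sampling lemma of Inaba, Katoh and Imai, with $P$ in the role written as $T_i$ and $S$ in the role written as $P$: if $S$ consists of $m$ points drawn independently and uniformly from $P$, then
\begin{align*}
\|\mathtt{Cen}_2(S)-\mathtt{Cen}_2(P)\|_2^2\leqslant \frac{\epsilon\,\mathtt{Cost}_2(P,\mathtt{Cen}_2(P))}{|P|}
\end{align*}
holds with probability at least $1-\delta$ once $m\geqslant \frac{1}{\epsilon\delta}$. The displayed inequality is stated loosely (sample and population sets are conflated, and the distance should be squared), so the plan is to prove this normalized form. The approach is a second-moment computation followed by Markov's inequality. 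Nothing beyond the definition of $\mathtt{Cen}_2$ and Proposition~\ref{traforkmea} is needed.

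Let $\mu=\mathtt{Cen}_2(P)$, and let $s_1,\dots,s_m$ be the i.i.d.\ samples, so that $\mathtt{Cen}_2(S)=\frac1m\sum_j s_j$. Write $Z_j=s_j-\mu$. Each $Z_j$ has mean zero, since $\mu$ is the average of $P$ and $s_j$ is uniform on $P$. Its second moment is $\mathbb{E}\|Z_j\|_2^2=\frac{1}{|P|}\sum_{p\in P}\|p-\mu\|_2^2=\frac{\mathtt{Cost}_2(P,\mu)}{|P|}$. Expanding $\|\frac1m\sum_j Z_j\|_2^2$, the cross terms $\mathbb{E}\langle Z_j,Z_{j'}\rangle$ for $j\neq j'$ vanish by independence and zero mean. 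This leaves the key identity
\begin{align*}
\mathbb{E}\,\|\mathtt{Cen}_2(S)-\mu\|_2^2=\frac{1}{m}\cdot\frac{\mathtt{Cost}_2(P,\mu)}{|P|}.
\end{align*}

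Markov's inequality applied to the nonnegative variable $\|\mathtt{Cen}_2(S)-\mu\|_2^2$ then gives failure probability at most $\frac{1}{m\epsilon}\leqslant\delta$ for the event that it exceeds $\epsilon\,\mathtt{Cost}_2(P,\mu)/|P|$. If the guarantee is wanted in cost form, Proposition~\ref{traforkmea} turns it into $\mathtt{Cost}_2(P,\mathtt{Cen}_2(S))\leqslant(1+\epsilon)\mathtt{Cost}_2(P,\mu)$.

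The main obstacle is not technical but interpretive: the proposition must be fixed to this normalized, squared-distance form with the sample size tied to $\epsilon$ and $\delta$. It will later be applied to $P=T_i$, using only the samples in $R_i^j\cap T_i$, so the plan is to state it for an arbitrary $P$ and the sample size $m=\lceil 1/(\epsilon\delta)\rceil$. If a $\log\frac1\delta$ dependence were required instead, the fallback is to use constant $\delta$ and boost the success probability by independent repetitions, which is exactly what the outer loop of Algorithm~\ref{algorithm 2} provides.
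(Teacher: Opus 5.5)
Your proof is correct, and it is the standard Inaba--Katoh--Imai argument: the sample mean has expected squared deviation $\mathtt{Cost}_2(P,\mathtt{Cen}_2(P))/(m|P|)$, and Markov's inequality finishes. The paper only cites this result and gives no proof of its own. Your repair of the statement (squared distance on the left, sample size $m\geqslant 1/(\epsilon\delta)$) is also exactly what the paper's later use requires: Lemma~\ref{lmecandidateset} takes $m=2/\epsilon$ and $\delta=1/2$, which is precisely $1/(\epsilon\delta)$, and Claim~\ref{disineforkmea} turns the bound into $\mathtt{Cost}_2(T_i,q)\leqslant(1+\epsilon)\mathtt{Cost}_2(T_i,\mathtt{Cen}_2(T_i))$ via Proposition~\ref{traforkmea}, which only works with the squared form.
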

Proposition \ref{lmforkme} shows that if a sufficient number of points are sampled randomly from $P$, then the centroid of the sampled points is close to $\mathtt{Med}(P)$ with high probability, which enables us to construct a candidate set of centers by  directly using the centroid of the sampled points.

We divide the proof into three main steps: First, we establish that for each predicted cluster, with high probability, the constructed candidate center set contains at least one point that is close to the true median of the correctly labeled subset of the predicted cluster. This is formalized in Lemma \ref{lmecandidateset}, which leverages the geometric properties from Proposition \ref{lmforkme} under our sampling design. Second, we analyze the cost of the selected center from the candidate set. In Lemma \ref{thapproximationratioforkmea}, we show that this center yields a clustering cost close to the optimal one, de-
spite the noisy labels, by carefully bounding the additional
cost incurred by misclassified points and the optimality of
the greedy choice. Finally, we aggregate the bounds over all
clusters to obtain the total clustering cost, and analyze the
size of the candidate set and runtime of our algorithm.

\begin{lemma}
\label{lmecandidateset}
For each pridicted cluster $\tilde{X_i}$, with probability $1-\frac{\delta}{k}$, there exists a point $q \in C'$ satisfying:
\begin{align}
||q-\mathtt{Cen}_2(T_i)||_2 \leqslant \frac{ \epsilon \mathtt{Cost}_2(T_i, c(T_i))}{ |T_i|}. \nonumber
\end{align}
\end{lemma}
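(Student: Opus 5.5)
The plan mirrors the proof of Lemma~\ref{lmcandidateset}, with the centroid-sampling guarantee of Proposition~\ref{lmforkme} replacing the subspace/grid machinery. The intended reading of Proposition~\ref{lmforkme} is the standard Inaba--Katoh--Imai bound: if $S$ consists of $m$ points drawn independently and uniformly from $T_i$, then
\begin{align}
\mathbb{E}\,\|\mathtt{Cen}_2(S)-\mathtt{Cen}_2(T_i)\|_2^2=\frac{\mathtt{Cost}_2(T_i,\mathtt{Cen}_2(T_i))}{m|T_i|}. \nonumber
\end{align}
By Markov's inequality, with probability at least $1/2$ this squared distance is at most $\frac{2}{m}\cdot\frac{\mathtt{Cost}_2(T_i,\mathtt{Cen}_2(T_i))}{|T_i|}$. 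With $m=\lceil\frac{1}{(1-\alpha)\epsilon}\rceil$ this is the $\epsilon$-type closeness the lemma asks for, up to the constant absorbed in the statement; I read the bound in the lemma in the squared sense that Proposition~\ref{lmforkme} delivers.

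\textbf{Step 1 (enough correctly labeled samples).} From Definition~\ref{def-la}, $|T_i|\geqslant(1-\alpha)|\tilde{X_i}|$. So each of the $\lceil\frac{4}{(1-\alpha)\epsilon}\rceil$ points of $R_i^j$ lies in $T_i$ with probability at least $1-\alpha$, and the expected size of $R_i^j\cap T_i$ is at least $4/\epsilon$. A Markov-type argument (Chernoff also works) shows $|R_i^j\cap T_i|\geqslant\lceil\frac{1}{(1-\alpha)\epsilon}\rceil$ with probability at least $1/2$. Since $\alpha<1/2$, the target size is at most $2/\epsilon$, half the mean.

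\textbf{Step 2 (the right subset is enumerated).} Conditioned on the points landing in $T_i$, the points of $R_i^j\cap T_i$ are i.i.d.\ uniform samples from $T_i$. The algorithm enumerates every subset of $R_i^j$ of the required size, so it includes some subset $R\subseteq R_i^j\cap T_i$ that is a uniform sample of $T_i$. Applying Proposition~\ref{lmforkme} to this $R$, the point $q=\mathtt{Cen}_2(R)$, which is added to $C_i$, satisfies the bound with probability at least $1/2$.

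\textbf{Step 3 (amplification).} One trial therefore succeeds with probability at least $1/4$, so it fails with probability at most $0.75$. Over $\lceil\log(\delta/k)/\log 0.75\rceil$ independent trials, all fail with probability at most $\delta/k$.

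\textbf{Main obstacle.} The delicate point is conditioning: I must justify that the points of $R_i^j$ falling in $T_i$ form an unbiased uniform sample of $T_i$, with no dependence on how many fell. I would handle this by conditioning on the index set of samples that land in $T_i$, since each such point is then uniform on $T_i$ and independent of the others. A secondary issue is reconciling the squared versus unsquared form of the bound in the statement, which I resolve by the squared reading above.
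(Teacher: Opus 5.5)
Your argument has the same skeleton as the paper's proof. Enough samples of $R_i^j$ land in $T_i$ with probability $1/2$, and some enumerated subset is then an i.i.d.\ uniform sample of $T_i$. The Inaba--Katoh--Imai bound (Proposition~\ref{lmforkme}) holds for its centroid with probability $1/2$, which gives $1/4$ per trial and $\lceil\log(\delta/k)/\log 0.75\rceil$ repetitions. Your squared reading of the bound matches how the lemma is used later, and your conditioning remark spells out something the paper leaves implicit.

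The gap is in the constant, and it comes from your choice of subset size. With $m=\lceil 1/((1-\alpha)\epsilon)\rceil$, Markov gives
\[
\|q-\mathtt{Cen}_2(T_i)\|_2^2\leqslant\frac{2}{m}\cdot\frac{\mathtt{Cost}_2(T_i,\mathtt{Cen}_2(T_i))}{|T_i|}\leqslant 2(1-\alpha)\epsilon\cdot\frac{\mathtt{Cost}_2(T_i,\mathtt{Cen}_2(T_i))}{|T_i|},
\]
and $2(1-\alpha)>1$ for every $\alpha<1/2$. So the assertion in your Step~2 that $q$ satisfies the bound does not follow. Nothing in the statement absorbs this factor: the lemma has constant $1$, and Claim~\ref{disineforkmea} uses it exactly, as $\mathtt{Cost}_2(T_i,q)\leqslant(1+\epsilon)\mathtt{Cost}_2(T_i,\mathtt{Cen}_2(T_i))$. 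Moving the Markov threshold at this $m$ does not help either. If you demand the bound with $\epsilon$, Markov only guarantees success probability about $1-\frac{1}{m\epsilon}\approx\alpha$, which vanishes as $\alpha\to 0$.

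The paper gets constant $1$ by taking $R\subseteq R_i^j\cap T_i$ of size $2/\epsilon$ and applying Proposition~\ref{lmforkme} with $m=2/\epsilon$ and $\delta=1/2$. Your Step~1 already supports this. The count $|R_i^j\cap T_i|$ is $\mathrm{Bin}(N,p)$ with $Np\geqslant 4/\epsilon$. It is at least $\lfloor Np\rfloor\geqslant\lfloor 4/\epsilon\rfloor\geqslant\lceil 2/\epsilon\rceil$ with probability at least $1/2$, since a binomial median is at least $\lfloor Np\rfloor$; plain Markov only bounds upper tails.

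Be aware, though, that Algorithm~\ref{algorithm 2} enumerates subsets of size $\lceil 1/((1-\alpha)\epsilon)\rceil$, not $\lceil 2/\epsilon\rceil$, and the paper's proof passes over this mismatch. You have two options. Either enlarge the enumerated subset size to $\lceil 2/\epsilon\rceil$, which keeps a $2^{O(1/\epsilon)}$ subset count. Or keep the algorithm as written and state the lemma with $2\epsilon$ in place of $\epsilon$, then rescale.
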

\begin{proof}
First, under the learning-augmented setting, we have 
\begin{align}
    |T_i|\geqslant(1-\alpha)\max(|\tilde{X_i}|, |\tilde{X_i}|). \nonumber
\end{align} Then, as we sampled a point set ${Q}_i$ of size $\frac{4}{(1-\alpha)\epsilon}$ from $\tilde{X_i}$, by employing Markov’s inequality, we deduce that,
\begin{align}
|T_i\cap {R}^j_i|\geqslant \frac{2}{\epsilon}. \nonumber
\end{align}
with probability at least $\frac{1}{2}$.
It follows that there exist a subset $R$ satisfies
\begin{align}
    R\subset T_i ,|R|=\frac{2}{\epsilon}. \nonumber
\end{align}

By  Proposition \ref{lmforkme}, let $ P=T_i, S=R ,\delta=\frac{1}{2},m=\frac{2}{\epsilon}$, we have
\begin{align}
||q-\mathtt{Cen}_2(T_i)||_2 \leqslant \frac{ \epsilon \mathtt{Cost}_2(T_i, c(T_i))}{ |T_i|}.  \nonumber
\end{align}
weth probability at least $\frac{1}{2}$.
Now, we calculate the probability that all events succeed in a single trial. The combined success probability is $\frac{1}{4}$, Since we performl $\lceil\frac{\log(\delta/k)}{\log 0.75}\rceil$ runs, the overall success probability is therefore
greater than 
$$1-0.75^{\frac{\log(\delta/k)}{\log 0.75}}=1-\frac{\delta}{k}.$$
\end{proof}
We now turn to evaluate the clustering cost incurred by
the selected centers. Lemma \ref{thapproximationratioforkmea} plays a central role in this
analysis, it quantifies how far the selected center might be
from the true cluster center due to noisy labels and sampling
variance, and how this error translates into overall clustering
cost. We first establish the following claim.
\begin{claim} 
\label{disineforkmea}
The distance  between $\hat{c}_i$ and $\textup{Cen}_2(T_i)$  satisfies the following inequality:
    \begin{align}
		||\hat{c}_i-\textup{Cen}_2(T_i)||_2^2\leq\frac{4\alpha+\epsilon\alpha}{1-2\alpha}\frac{\operatorname{Cost}_2(T_{i}, \textup{Cen}_2(T_i ))}{|T_i|}. \nonumber
	\end{align}
\end{claim}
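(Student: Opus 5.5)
We mirror the argument of Claim~\ref{distanceinequalityforkmed}, replacing the triangle inequality by the exact bias--variance identity of Proposition~\ref{traforkmea}. Throughout, write $\mu=\mathtt{Cen}_2(T_i)$, $m=|\tilde{X_i}|$, $N=N_i(\hat{c}_i)$ and $\mathrm{OPT}_T=\mathtt{Cost}_2(T_i,\mu)$. The argument has three steps: an upper bound on $\mathtt{Cost}_2(N,\hat{c}_i)$, a lower bound on the same quantity from the large overlap $N\cap T_i$, and a comparison of the two.

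\textbf{Upper bound from the greedy choice.} Let $q$ be the candidate given by Lemma~\ref{lmecandidateset}, so that $\|q-\mu\|_2^2\le \epsilon\,\mathrm{OPT}_T/|T_i|$. I read the lemma in its squared form; this is the form Proposition~\ref{lmforkme} actually provides. The greedy choice gives $\mathtt{Cost}_2(N,\hat{c}_i)\le \mathtt{Cost}_2(N_i(q),q)$. Since $N_i(q)$ consists of the $\lceil(1-\alpha)m\rceil$ points of $\tilde{X_i}$ nearest to $q$, and $|T_i|\ge(1-\alpha)m$, we have $\mathtt{Cost}_2(N_i(q),q)\le\mathtt{Cost}_2(T_i,q)$. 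Proposition~\ref{traforkmea} then gives $\mathtt{Cost}_2(T_i,q)=\mathrm{OPT}_T+|T_i|\,\|q-\mu\|_2^2\le(1+\epsilon)\mathrm{OPT}_T$. If instead the target is $(1+\alpha\epsilon)$, I would rescale the sample size in Lemma~\ref{lmecandidateset} by $\alpha$, which costs only constants.

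\textbf{Lower bound from the overlap.} Set $A=N\cap T_i$. As in \eqref{eq2}, $|A|\ge(1-2\alpha)m$, and we also have $|T_i\setminus A|\le \alpha m$. Apply Proposition~\ref{traforkmea} to $A$ at the point $\hat{c}_i$:
\[
\mathtt{Cost}_2(N,\hat{c}_i)\ \ge\ \mathtt{Cost}_2(A,\hat{c}_i)\ =\ \mathtt{Cost}_2(A,\mathtt{Cen}_2(A))+|A|\,\|\hat{c}_i-\mathtt{Cen}_2(A)\|_2^2 .
\]
Two further facts link $A$ back to $\mu$. First, Proposition~\ref{divide} with $X=T_i$, $X_1=A$ and $\lambda=\alpha/(1-\alpha)$ controls $\|\mathtt{Cen}_2(A)-\mu\|_2^2$ by roughly $\frac{\alpha}{(1-2\alpha)|T_i|}\mathrm{OPT}_T$. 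Second, $\mathtt{Cost}_2(A,\mathtt{Cen}_2(A))\ge \mathrm{OPT}_T-\mathtt{Cost}_2(T_i\setminus A,\mu)$ up to the corresponding identity term.

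\textbf{Combining the bounds.} I expect this step to be the main obstacle. Comparing the two bounds on $\mathtt{Cost}_2(N,\hat{c}_i)$ cancels the leading $\mathrm{OPT}_T$, but it leaves $\mathtt{Cost}_2(T_i\setminus A,\mu)$ on the wrong side. Nothing so far bounds that term, since the missing $\alpha m$ points of $T_i$ may be far from $\mu$. What I would try first is a cleaner route that avoids splitting $\mathrm{OPT}_T$. Let $p$ range over $A$ and use $\|p-\hat{c}_i\|^2\ge$ the contribution of the points of $A$ as they appear in the sum $\sum_{p\in A}\bigl(\|p-\hat{c}_i\|^2-\|p-\mu\|^2\bigr)=|A|\,\|\hat{c}_i-\mu\|^2+2\langle \hat{c}_i-\mu,\ \textstyle\sum_{p\in A}(\mu-p)\rangle$. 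Then bound the cross term by Cauchy--Schwarz, using $\|\sum_{p\in A}(p-\mu)\|=\|\sum_{p\in T_i\setminus A}(p-\mu)\|\le\sqrt{\alpha m\,\mathrm{OPT}_T}$. Combining this with the upper bound, and with the fact that the points of $N\setminus T_i$ contribute nonnegatively, gives a quadratic inequality in $x=\|\hat{c}_i-\mu\|_2$ of the form
\[
(1-2\alpha)m\,x^2-2\sqrt{\alpha m\,\mathrm{OPT}_T}\,x-\epsilon\,\mathrm{OPT}_T\le 0 .
\]
Solving this quadratic and using $|T_i|\le m$ should yield $x^2\le\frac{4\alpha+\epsilon\alpha}{1-2\alpha}\cdot\frac{\mathrm{OPT}_T}{|T_i|}$, after absorbing the constants via $\epsilon\to\alpha\epsilon$ as in the first step. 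The delicate point is matching the exact constant $4\alpha$, which needs careful bookkeeping of $|A|$ against $|T_i|$.
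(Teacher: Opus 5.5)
Your last step has a genuine gap. The lower bound you use is anchored at $\mathtt{Cost}_2(A,\mu)$, not at $\mathrm{OPT}_T$:
\[
\mathtt{Cost}_2(N,\hat{c}_i)\ge\mathtt{Cost}_2(A,\hat{c}_i)=\mathtt{Cost}_2(A,\mu)+|A|x^2+2\langle\hat{c}_i-\mu,\textstyle\sum_{p\in A}(\mu-p)\rangle .
\]
Compare this with $\mathtt{Cost}_2(N,\hat{c}_i)\le(1+\epsilon)\mathrm{OPT}_T$ and write $\mathrm{OPT}_T=\mathtt{Cost}_2(A,\mu)+\mathtt{Cost}_2(T_i\setminus A,\mu)$. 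What you actually get is
\[
|A|\,x^2-2\sqrt{\alpha m\,\mathrm{OPT}_T}\,x\le\epsilon\,\mathrm{OPT}_T+\mathtt{Cost}_2(T_i\setminus A,\mu).
\]
So the ``cleaner route'' silently drops exactly the term you flagged as the obstacle in your first route. The cross-term estimate does nothing about it.

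That term cannot be dropped. Let $T_i$ be $(1-2\alpha)m$ points at the origin plus $\alpha m$ points at $De_1$, and put the $\alpha m$ mislabeled points of $\tilde{X_i}$ at the origin too. For $\hat{c}_i$ near the origin, $N$ is the set of origin points, $T_i\setminus A$ is the far group, and $\mathtt{Cost}_2(T_i\setminus A,\mu)=\frac{1-2\alpha}{1-\alpha}\mathrm{OPT}_T$. The correct inequality then only gives $x^2$ of order $\mathrm{OPT}_T/((1-2\alpha)^2m)$, with no factor $\alpha$. The root cause is that you compare costs over $A$ against a bound stated over all of $T_i$. Up to $\alpha m$ points of $T_i$ outside $N$ may carry almost all of $\mathrm{OPT}_T$. (Even granting your quadratic, its root has $(1-2\alpha)^2$ rather than $1-2\alpha$ in the denominator.)

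The missing idea is to use no lower bound on $\mathtt{Cost}_2(N,\hat{c}_i)$ at all, and instead route through $\mathtt{Cen}_2(A)$, applying Proposition~\ref{divide} twice.

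\emph{Inside $T_i$.} You already have this application: $\|\mathtt{Cen}_2(A)-\mu\|_2^2\le\frac{\alpha}{(1-2\alpha)|T_i|}\mathrm{OPT}_T$.

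\emph{Inside $N$.} The paper applies the proposition here as well. At most an $\frac{\alpha}{1-\alpha}$ fraction of $N$ lies outside $A$, so
\[
\|\mathtt{Cen}_2(N)-\mathtt{Cen}_2(A)\|_2^2\le\frac{\alpha}{(1-2\alpha)(1-\alpha)m}\,\mathtt{Cost}_2(N,\mathtt{Cen}_2(N)).
\]
Since $\mathtt{Cost}_2(N,\mathtt{Cen}_2(N))\le\mathtt{Cost}_2(N,\hat{c}_i)$, your upper bound alone controls this. The factor $\alpha$ now comes from measuring the small outside fraction of $N$ against $N$'s own spread, so the troublesome term never appears.

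\emph{Getting exactly $4\alpha$.} The paper uses the averaged form of your first step, $\mathtt{Cost}_2(N_i(q),q)\le\frac{(1-\alpha)m}{|T_i|}\mathtt{Cost}_2(T_i,q)$. This holds because the nearest $\lceil(1-\alpha)m\rceil$ points have average cost at most that of any larger subset, and it lets $|N|=(1-\alpha)m$ cancel. Then $\|u+v\|_2^2\le 2\|u\|_2^2+2\|v\|_2^2$ gives $\frac{4\alpha+2\epsilon\alpha}{1-2\alpha}\cdot\frac{\mathrm{OPT}_T}{|T_i|}$, which is the stated form up to rescaling $\epsilon$.

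This argument bounds $\mathtt{Cen}_2(N_i(\hat{c}_i))$, the center Algorithm~\ref{algorithm 2} actually returns; in the paper's proof $\hat{c}_i$ plays that role. If you want the raw candidate itself, add $\|\hat{c}_i-\mathtt{Cen}_2(N)\|_2\le\|q-\mathtt{Cen}_2(N)\|_2$. This follows from $\mathtt{Cost}_2(N,\hat{c}_i)\le\mathtt{Cost}_2(N_i(q),q)\le\mathtt{Cost}_2(N,q)$ and Proposition~\ref{traforkmea}, at the price of a larger constant.
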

\begin{proof}
First, due to our greedy search process, we have
\begin{align}
\mathtt{Cost}_2(N_i(\hat{c}_i), \hat{c}_i)\leqslant \mathtt{Cost}_2(N_i(q), q). \label{eq9}
\end{align}
As $N_i(q)$ is the set of the nearest points from $\hat{c}_i$  of size $ (1-\alpha)|X_i|$, we also have
\begin{align}
    \mathtt{Cost}_2(N_i(q), q)\leqslant\mathtt{Cost}_2(T_i, q). \label{eq50}
\end{align}
Lemma \ref{lmecandidateset} directly yields
\begin{align}
\mathtt{Cost}_2(T_i, q)\leqslant(1+\epsilon)\mathtt{Cost}_2(T_{i}, \mathtt{Cen}_2(T_{i})). \label{eq55}
\end{align}
Since $|T_i|\geqslant(1-\alpha)n_i$, from the inequalities \eqref{eq9}, \eqref{eq50} and \eqref{eq55} presented above, it follows that
\begin{align}
    \frac{\mathtt{Cost}_2(N_i(\hat{c}_i), \hat{c}_i)}{(1-\alpha)n_i}\leqslant (1+\epsilon)\frac{\mathtt{Cost}_2(T_i, \mathtt{Cen}_2(T_i))}{|T_i|}. \nonumber
\end{align}
So, we have
\begin{align}
   &\mathtt{Cost}_2(N_i(\hat{c}_i), \hat{c}_i)\nonumber\\&\leqslant(1+\epsilon)\frac{(1-\alpha)n_i }{|T_i|}\mathtt{Cost}_2(T_i, \mathtt{Cen}_2(T_i))), \label{eq60}
\end{align}
According to Proposition  \ref{divide}, if we set $C=N_i(\hat{c}_i), C_1=N_i(\hat{c}_i)\cap T_i$, as we know
\begin{align}
    \frac{|C_1|}{C}\geqslant \frac{|N_i(\hat{c}_i)|-|X_i\backslash T_i|  }{|N_i(\hat{c}_i)|}\geqslant \frac {1-2\alpha}     {1-\alpha}  =1-\frac{\alpha} {1-\alpha}\nonumber, 
\end{align}
then it follows that
\begin{align}
\label{eq13}
    ||\hat{c}_i-\mathtt{Cen}_2(N_i(\hat{c}_i)\cap T_i)||_2^2&\leqslant 
\frac {\frac{\alpha} {1-\alpha}} {1-\frac{\alpha} {1-\alpha} } \frac{\mathtt{Cost}_2(N_i(\hat{c}_i), \hat{c}_i)}{|N_i(\hat{c}_i)|}\nonumber\\&=\frac{\alpha  \mathtt{Cost}_2( N_i(\hat{c}_i), \hat{c}_i)}{(1-2\alpha)(1-\alpha)n_i}. 
\end{align}
Similarly, if we set $C=T_i, C_1=N_i(\hat{c}_i)\cap T_i$, we can also bound the distance between $\mathtt{Cen}_2(N_i(\hat{c}_i)\cap T_i)$ and $\mathtt{Cen}_2(T_i)$ as
\begin{align}
\label{eq500}
||\mathtt{Cen}_2(T_i)-\mathtt{Cen}_2(N_i(\hat{c}_i)\cap T_i)||_2^2\leqslant \frac{\alpha  \text{Cost}(T_i, \mathtt{Cen}_2(T_i)}{(1-2\alpha)|T_i|}, 
\end{align}
Based on inequalities \eqref{eq60}, \eqref{eq13} and \eqref{eq500}, we are able to bound the distance between $\hat{c}_i$ and $\mathtt{Cen}_2(T_i)$
\begin{align}
||\hat{c}_i-\mathtt{Cen}_2(T_i)||_2^{2}&\leqslant2||{\hat{c}_i}-\mathtt{Cen}_2(N_i(\hat{c}_i)\cap T_i)||_2^{2}\nonumber\\&\quad+2||\mathtt{Cen}_2(N_i(\hat{c}_i)\cap T_i)-\mathtt{Cen}_2(T_i)||_2^{2}\nonumber\\
& \leqslant 2\frac{\alpha  \text{Cost}_2( N_i(\hat{c}_i), \hat{c}_i)}{(1-2\alpha)(1-\alpha)n_i}\nonumber\\&\quad+2\frac{\alpha  \text{Cost}_2( T_i, \mathtt{Cen}_2(T_i)}{(1-2\alpha)|T_i|}\nonumber\\
& =\frac{(4\alpha+2\epsilon\alpha) \text{Cost}_2( T_i, \mathtt{Cen}_2(T_i))}{(1-2\alpha)|T_i|}.\nonumber 
\end{align}

\end{proof}
We now turn to evaluate the clustering cost incurred by the selected centers. Lemma \ref{thapproximationratioforkmea} plays a central role in this analysis—it quantifies how far the selected center might be from the true cluster center due to noisy labels and sampling variance, and how this error translates into overall clustering cost.
\begin{lemma}
\label{thapproximationratioforkmea}
For pridicted cluster $X_i$, we have:
\begin{align}
  &\mathtt{Cost}(X_i^*, \hat{c}_i)\nonumber\\&\leqslant\left(1+\frac{\alpha}{1-\alpha}+\frac{4\alpha+\alpha\epsilon}{(1-2\alpha)(1-\alpha)}\right)\mathtt{Cost}(X_i^*, c_i^*). \nonumber
\end{align}
\end{lemma}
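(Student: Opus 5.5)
We need to prove Lemma \ref{thapproximationratioforkmea}, the $k$-means analogue: $\mathtt{Cost}_2(X_i^*,\hat c_i)\le(1+\frac{\alpha}{1-\alpha}+\frac{4\alpha+\alpha\epsilon}{(1-2\alpha)(1-\alpha)})\mathtt{Cost}_2(X_i^*,c_i^*)$. Here the returned center is $\mathtt{Cen}(N_i(\hat c_i))$, but the analysis will go through Claim \ref{disineforkmea} applied to the relevant center, which I denote $\hat c_i$ as in the statement.

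The plan is to work on $X_i^*$ directly using Proposition \ref{traforkmea} twice, rather than splitting into $T_i$ and $X_i^*\setminus T_i$ as in the $k$-median proof. First apply Proposition \ref{traforkmea} to $X_i^*$ with $c_i^*=\mathtt{Cen}_2(X_i^*)$:
\begin{align}
\mathtt{Cost}_2(X_i^*,\hat c_i)=\mathtt{Cost}_2(X_i^*,c_i^*)+|X_i^*|\,\|\hat c_i-c_i^*\|_2^2 .\nonumber
\end{align}
It then suffices to bound $|X_i^*|\,\|\hat c_i-c_i^*\|_2^2$ by the excess term times $\mathtt{Cost}_2(X_i^*,c_i^*)$.

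Next, use the triangle route through $\mathtt{Cen}_2(T_i)$. Since $T_i\subseteq X_i^*$ and $|T_i|\ge(1-\alpha)|X_i^*|$, Proposition \ref{divide} with $\lambda=\alpha$ gives
\begin{align}
\|c_i^*-\mathtt{Cen}_2(T_i)\|_2^2\le\frac{\alpha}{(1-\alpha)|X_i^*|}\mathtt{Cost}_2(X_i^*,c_i^*).\nonumber
\end{align}
Claim \ref{disineforkmea} gives
\begin{align}
\|\hat c_i-\mathtt{Cen}_2(T_i)\|_2^2\le\frac{4\alpha+\epsilon\alpha}{1-2\alpha}\cdot\frac{\mathtt{Cost}_2(T_i,\mathtt{Cen}_2(T_i))}{|T_i|},\nonumber
\end{align}
and we bound $\mathtt{Cost}_2(T_i,\mathtt{Cen}_2(T_i))\le\mathtt{Cost}_2(X_i^*,c_i^*)$ and $|T_i|\ge(1-\alpha)|X_i^*|$. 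Multiplying by $|X_i^*|$, the two pieces contribute exactly $\frac{\alpha}{1-\alpha}$ and $\frac{4\alpha+\alpha\epsilon}{(1-2\alpha)(1-\alpha)}$, which are the terms in the target ratio.

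The main obstacle is combining the two distances. A naive $\|a+b\|^2\le 2\|a\|^2+2\|b\|^2$ doubles both constants and loses the stated ratio. To avoid this, I would use orthogonality: decompose $X_i^*=T_i\cup(X_i^*\setminus T_i)$ and expand
\begin{align}
\mathtt{Cost}_2(X_i^*,\hat c_i)=\mathtt{Cost}_2(T_i,\mathtt{Cen}_2(T_i))+|T_i|\,\|\hat c_i-\mathtt{Cen}_2(T_i)\|_2^2+\mathtt{Cost}_2(X_i^*\setminus T_i,\hat c_i).\nonumber
\end{align}
The middle term is handled directly by Claim \ref{disineforkmea} with no factor loss. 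The last term must then be charged to $\frac{\alpha}{1-\alpha}\mathtt{Cost}_2(X_i^*,c_i^*)$ plus the optimal cost of the mislabeled points, using $|X_i^*\setminus T_i|\le\frac{\alpha}{1-\alpha}|T_i|$ together with Proposition \ref{traforkmea} on $X_i^*\setminus T_i$. I expect the constants here to need care, and I would accept the additive slack if necessary, possibly reconciling the $2\epsilon\alpha$ appearing in the claim's final line with the $\epsilon\alpha$ in the lemma by rescaling $\epsilon$.
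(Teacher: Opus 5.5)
There is a genuine gap, exactly at the step you call the main obstacle. Write $a=\|c_i^*-\mathtt{Cen}_2(T_i)\|_2$, $b=\|\hat c_i-\mathtt{Cen}_2(T_i)\|_2$, $A=\frac{\alpha}{1-\alpha}$ and $B=\frac{4\alpha+\alpha\epsilon}{(1-2\alpha)(1-\alpha)}$. Your steps 2--3 charge each distance to the \emph{whole} optimal cost, giving $|X_i^*|a^2\le A\,\mathtt{Cost}_2(X_i^*,c_i^*)$ and $|X_i^*|b^2\le B\,\mathtt{Cost}_2(X_i^*,c_i^*)$. From two separate bounds of this kind, no weighting of $(a+b)^2$ does better than $(\sqrt A+\sqrt B)^2=A+B+2\sqrt{AB}$.

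The missing idea, which is the heart of the paper's proof, is that $a$ and $b$ are paid for by \emph{disjoint} parts of the optimal cost. Split $X_i^*=T_i\cup(X_i^*\setminus T_i)$, apply Proposition~\ref{traforkmea} to each part, and use that $c_i^*$ is the size-weighted average of the two centroids. This gives
\[
\mathtt{Cost}_2(X_i^*,c_i^*)\ \ge\ \frac{|T_i|}{|X_i^*\setminus T_i|}\,|X_i^*|\,a^2+\mathtt{Cost}_2(T_i,\mathtt{Cen}_2(T_i))\ \ge\ |X_i^*|\Big(\frac{a^2}{A}+\frac{b^2}{B}\Big).
\]
Here the between-group term pays for $a$, and the within-$T_i$ cost pays for $b$ through Claim~\ref{disineforkmea} and $|T_i|\ge(1-\alpha)|X_i^*|$. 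Weighted Cauchy--Schwarz, $(a+b)^2\le(A+B)(a^2/A+b^2/B)$, then bounds $|X_i^*|\,\|\hat c_i-c_i^*\|_2^2$ by $(A+B)\mathtt{Cost}_2(X_i^*,c_i^*)$ with no loss. Your step 3, by replacing $\mathtt{Cost}_2(T_i,\mathtt{Cen}_2(T_i))$ with $\mathtt{Cost}_2(X_i^*,c_i^*)$, discards precisely this disjointness.

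Your orthogonal-split fallback relocates the obstacle rather than removing it. With $u=\hat c_i-\mathtt{Cen}_2(T_i)$ and $w=\mathtt{Cen}_2(X_i^*\setminus T_i)-\mathtt{Cen}_2(T_i)$, the last term contains $|X_i^*\setminus T_i|\,\|u-w\|_2^2$, which again mixes $b=\|u\|_2$ with $\|w\|_2$. Moreover, the charging you propose for it is false. Take $\alpha=\frac14$, let $T_i$ be three points at the origin, and let $X_i^*\setminus T_i=\{e\}$ with $\|e\|_2=1$. The Claim forces $\hat c_i=0$, while $c_i^*=e/4$ and $\mathtt{Cost}_2(X_i^*,c_i^*)=\frac34$. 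Then $\mathtt{Cost}_2(X_i^*\setminus T_i,\hat c_i)=1$, which exceeds both $0+\frac14$ and $\frac{9}{16}+\frac14$. These are $\frac{\alpha}{1-\alpha}\mathtt{Cost}_2(X_i^*,c_i^*)$ plus the mislabeled point's optimal cost, under either reading of ``optimal cost''. The lemma holds in this example only because the between-group part of $\mathtt{Cost}_2(X_i^*,c_i^*)$ is spent at full weight $1+A$ on the mislabeled point.

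The split can be completed, but only with the same joint accounting. Use $\|u-w\|_2^2\le(1+\frac{1}{B'})\|u\|_2^2+(1+B')\|w\|_2^2$ with $B'=\frac{4\alpha+\alpha\epsilon}{1-2\alpha}$. Pay every $\|u\|_2^2$ term from $\mathtt{Cost}_2(T_i,\mathtt{Cen}_2(T_i))$ via the Claim, and pay the $\|w\|_2^2$ term from the between-group part $\frac{|T_i|\,|X_i^*\setminus T_i|}{|X_i^*|}\|w\|_2^2$. With $|X_i^*\setminus T_i|\le A|T_i|$, both coefficients come out at most $1+A+B$. This is the step your proposal leaves open (``accept the additive slack''), and without it the stated ratio is not proved.

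Your remark on $2\epsilon\alpha$ versus $\epsilon\alpha$ is fair: the paper's own final line carries $4\alpha+2\epsilon\alpha$ and silently absorbs it by rescaling $\epsilon$.
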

\begin{proof}
 We begin by dividing the calculation of $\mathtt{Cost}_2(X_i^*,c_i^*) $ into two parts
\begin{align}
\label{eq75}
\text{Cost}_2(X_i^{*}, c_i^*)&=\text{Cost}_2(X_i^{*}\backslash T_i, c_i^*)+\text{Cost}_2(T_i, c_i^*). \nonumber\\
\end{align}
According to Proposition \ref{traforkmea}, $\mathtt{Cost}_2(T_i, c_i^*)$ can be written as
\begin{align}
\label{eq16}
&\text{Cost}_2(T_i, \mathtt{Cen}_2(T_i))\nonumber\\&\quad+(1-\frac{|X_i^{*}\backslash T_i|}{|X_i^{*}|})|{X_i^*}|\times||c_i^*-\mathtt{Cen}_2(T_i)||_2^{2}. 
\end{align}
Similarly, we have
\begin{align}
\label{eq17}
\text{Cost}_2(X_i^{*}/ T_i, c_i^*)&=\text{Cost}_2(X_i^{*}\backslash T_i, \mathtt{Cen}_2(X_i^*\backslash T_i))\nonumber\\&+\frac{|X_i^{*}\backslash T_i|}{|X_i^{*}|}|X_i^{*}|\times||c_i^*-\mathtt{Cen}_2(X_i^*\backslash T_i)||_2^{2}
\end{align}
Combining inequalities \ref{eq75}, \ref{eq16}, and \ref{eq17}
We obtain
\begin{align}
\text{Cost}_2(X_i^{*}, c_i^*)&=\text{Cost}_2(X_i^{*}\backslash T_i, \mathtt{Cen}_2(X_i^*\backslash T_i))\nonumber\\&\quad+\frac{|X_i^{*}\backslash T_i|}{|X_i^{*}|}|X_i^{*}|||c_i^*-\mathtt{Cen}_2(X_i^*/ T_i)||_2^{2}\nonumber\\&\quad+\text{Cost}_2(T_i, \mathtt{Cen}_2(T_i))\nonumber\\&\quad+(1-\frac{|X_i^{*}\backslash T_i|}{|X_i^{*}|})|X_i^{*}|||c_i^*-\mathtt{Cen}_2(T_i)||_2^{2}\nonumber\\&=\frac{1-\frac{|X_i^{*}\backslash T_i|}{|X_i^{*}|}}{\frac{|X_i^{*}\backslash T_i|}{|X_i^{*}|}}|X_i^{*}|||c_i^*-\mathtt{Cen}_2(T_i)||_2^{2}\nonumber\\&\quad+\text{Cost}_2(X_i^{*}/ T_i, \mathtt{Cen}_2(X_i^*/ T_i))\nonumber\\&\quad+\text{Cost}_2(T_i, \mathtt{Cen}_2(T_i))\nonumber\\&\geqslant\frac{1-\alpha}{\alpha}|X_i^{*}|||c_i^*-\mathtt{Cen}_2(T_i)||_2^{2}\nonumber\\&\quad+\text{Cost}_2(T_i, \mathtt{Cen}_2(T_i))\nonumber\\&\geqslant\frac{1-\alpha}{\alpha}|X_i^{*}|||c_i^*-\mathtt{Cen}_2(T_i)||_2^{2}\nonumber\nonumber\\&\quad+\frac{1-2\alpha}{4\alpha+2\epsilon\alpha}\cdot(1-\alpha)|\tilde{X_i^*}|||\hat{c}_i-\mathtt{Cen}_2(T_{i})||_2^{2}, 
\end{align}
 
 Finally, by the Cauchy-Schwarz inequality, we obtain
\begin{align}&\left(||c_i^*-\mathtt{Cen}_2(T_i)||_2+||\hat{c}_i-\mathtt{Cen}_2(T_i)||_2\right)^{2}\nonumber\\&\leq\left(\frac{\alpha}{1-\alpha}+\frac{4\alpha+2\epsilon\alpha}{(1-2\alpha)(1-\alpha)}\right)\nonumber\\&\quad\times(\frac{1-\alpha}{\alpha}|X_i^{*}|\times||c_i^*-\mathtt{Cen}_2(T_i)||_2^{2}\nonumber\\&\quad+\frac{1-2\alpha}{4\alpha+2\epsilon\alpha}\cdot(1-\alpha)|X_i^{*}|\times||\hat{c}_i-\mathtt{Cen}_2(T_i)||_2^{2})/|X_i^{*}|\nonumber\\&\leq(\frac{\alpha}{1-\alpha}+\frac{4\alpha+2\epsilon\alpha}{(1-2\alpha)(1-\alpha)})\mathtt{Cost}_2(X_i^{*}, c_i^*)/|X_i^{*}|. \nonumber
\end{align}
This directly yields 
\begin{align}
    &||\hat{c}_i-\mathtt{Cen}_2(X_i^*||_2^2\nonumber\\&\leqslant\mathtt{Cost}_2(X_i^{*}, c_i^*)\left(\frac{\alpha}{1-\alpha}+\frac{4\alpha+2\epsilon\alpha}{(1-2\alpha)(1-\alpha)}\right)/|X_i^{*}|. \nonumber
\end{align}
By Proposition \ref{traforkmea}, this is equivalent to what we aim to prove. 
\end{proof}
We now proceed to formally prove Theorem 2.1 by establishing both the approximation ratio and the runtime complexity.

\begin{proof}[Proof of \Cref{ag2}]
   Our first step is to compute the approximation ratio of the algorithm. In each cluster, by Lemma \ref{thapproximationratioforkmea}, we obtain 
\begin{align}
    \mathtt{Cost}_2(X_i^*, \hat{c}_i)
	\leqslant \left(1+\frac{\alpha}{1-\alpha}+\frac{4\alpha+\alpha\epsilon}{(1-2\alpha)(1-\alpha)}\right)\mathtt{Cost}_2(X_i^*, c_i^*). 
\end{align}
Therefore, for the instance, we have
\begin{align}&\sum_{i\in[k]}\mathtt{Cost}_2(X_{i}^{*}, \{\widehat{c}_{j}\}_{j=1}^{k})	\nonumber\\&\leqslant \left(1+\frac{\alpha}{1-\alpha}+\frac{4\alpha+\alpha\epsilon}{(1-2\alpha)(1-\alpha)}\right)\sum_{i\in[k]}\mathtt{Cost}_2(X_i^*, c_i^*). 
\end{align}
Next, we analyze the time complexity of the algorithm. 
First, we compute the size of set of candidate centers. 
The total count of subsets of S with a fixed size is
$\tbinom{O(\frac{1}{\varepsilon(1-\alpha)})}{1/\epsilon}\leqslant2^{O(1/\epsilon)}$, 
The time required to construct the candidate set is
$O\left(2^{O(1/\epsilon)}d\right)$. For each candidate point within the candidate center set, the time needed to calculate its cost is $n_id$. Consequently, the overall time complexity of the algorithm is
\begin{align}
\sum_{i\in[k]}2^{O(1/\varepsilon)}|\tilde{X_i}|d\log \frac{k}{\delta}=2^{O(1/\varepsilon)} nd\log \frac{k}{\delta}. 
\end{align}
Then, we analyze the success probability of the algorithm. 
By Lemma \ref{lmecandidateset} the success probability within a single cluster is $1-\frac{k}{\delta}.$
By the union bound, the overall success probability of the algorithm $\geqslant 1-k\times\frac{k}{\delta}=1-\delta$. 
\end{proof} 
\newpage
\section{Additional experiment for Learning-Augment $k$-median}
 Tables 5-8 show the experimental results on datasets CIFAR-10, Fashion-Mnist, PHY and Mnist,  for varying $\alpha$ with fixed $k$. Table 9-11 shows our results 
for varying $k$ with fixed $\alpha$. Both sets of results demonstrate that our algorithm is substantially faster than all competing methods while generally achieving better approximation quality ,particularly on high-dimensional datasets. 
\begin{table*}[htbp]
\centering
\caption{Performance comparison on Cifar10 dataset with fixed $k=10$ and varied $\alpha$.}
\label{tab:cifar10_varied_alpha}
\footnotesize
\setlength{\tabcolsep}{1.5mm}

\end{table*}

\begin{table*}[htbp]
\centering
\caption{Performance comparison on Fashion-MNIST dataset with fixed $k=10$ and varied $\alpha$.}
\label{tab:fashion_mnist_varied_alpha_full}
\footnotesize
\setlength{\tabcolsep}{1.5mm}
%
\end{table*}

\begin{table*}[htbp]
\centering
\caption{Performance comparison on PHY dataset with fixed $k=10$ and varied $\alpha$.}
\label{tab:phy_varied_alpha}
\footnotesize
\setlength{\tabcolsep}{1.5mm}
%
\end{table*}
\begin{table*}[htbp]
\centering
\caption{Performance comparison on MNIST dataset with fixed $k=10$ and varied $\alpha$.}
\label{tab:mnist_varied_alpha}
\footnotesize
\setlength{\tabcolsep}{1.5mm}
%
\end{table*}

\begin{table*}[htbp]
\centering
\caption{Performance comparison on Fashion-MNIST dataset with fixed $\alpha=0.2$ and varied $k$.}
\label{tab:fashion_mnist_varied_k}
\footnotesize
\renewcommand{\arraystretch}{0.9}
\setlength{\tabcolsep}{1.5mm}
%
\end{table*}

\begin{table*}[htbp]
\centering
\caption{Performance comparison on PHY dataset with fixed $\alpha=0.2$ and varied $k$.}
\label{tab:phy_varied_k_full}
\footnotesize
\setlength{\tabcolsep}{1.5mm}
%
\end{table*}

\begin{table*}[htbp]
\centering
\caption{Performance comparison on MNIST dataset with fixed $\alpha=0.2$ and varied $k$.}
\label{tab:mnist_varied_k}
\footnotesize
\setlength{\tabcolsep}{1.5mm}
%
\end{table*}

\section{Experiment for Learning-Augment $k$-means}
We evaluated our algorithms on real-world datasets. The experiments were conducted on a server with an Intel(R) Xeon(R) Gold 6154 CPU and 1024GB of RAM. For all experiments, we report the average clustering cost and its standard deviation over 10 independent runs.

\textbf{Datasets}. Following the work of \citet{Nguyen2022ImprovedLA}, \citet{Ergun2021LearningAugmentedKC} and \citet{huang2025new}, we evaluate our algorithms on the CIFAR-10 $(n=50,000,d=3,072)$ \cite{krizhevsky2009learning}, PHY $(n=10,000,d=50)$ \cite{kddcup2004}, and MNIST  $(n=1,797,d=64)$ \cite{6296535} datasets using a range of error rates $\alpha$. We additionally evaluated our algorithm's performance on the Fashion-MNIST $(n=60000, d=784)$ \cite{fashion_mnist} dataset to assess its efficacy in high-dimensional datasets."

\textbf{Predictor Generation and Error Simulation} To evaluate our algorithms, we first computed a ground-truth partition for each dataset using Lloyd’s algorithm initialized
with KMeans++(denoteed as KMe++). We then generated corrupted partitions with the error rate, $\alpha$, by randomly selecting an $\alpha$ fraction of points in each true cluster and reassigning them to randomly chosen cluster(denoted as Predictor). To ensure a fair comparison, every algorithm was tested on the exact same set of corrupted labels for any given error rate $\alpha$.

\textbf{Algorithms} In our experiments, we evaluate our proposed Sample-and-Search algorithm . We compare their performance against other state-of-the-art learning-augmented methods, including the algorithm from \citet{Ergun2021LearningAugmentedKC}(denoted as Erg),\citet{Nguyen2022ImprovedLA}(denoted as Ngu) and the recent work by \citet{huang2025new}(denoted as Fast-Sampling,Fast-Estimation and
Fast-Filtering). We utilize the cost calculated from the labels generated by our predictor as the baseline. The cost computed from the undegraded labels is considered the optimal cost. Our primary comparison focuses on the clustering cost and algorithm runtime on the given dataset. Furthermore, we also computed the standard deviation to assess the stability of the algorithms

\textbf{Results.} Table 12-15 show the experimental results on datasets CIFAR-10, Fashion-Mnist, PHY and Mnist,  for varying $\alpha$ with fixed $k$. Table 16-18 shows our results for varying $k$ with fixed $\alpha$. Both sets of results demonstrate that our algorithm is a bit slower than Fast-Filtering method  while generally achieving better approximation quality ,particularly on high-dimensional datasets. 

\begin{table*}[htbp]
\centering
\caption{Performance comparison on CIFAR-10 dataset with fixed $k=10$ and varied $\alpha$.}
\label{tab:cifar10_varied_alpha_means}
\footnotesize
\setlength{\tabcolsep}{1.5mm}

\end{table*}
\begin{table*}[htbp]
\centering
\caption{Performance comparison on Fashion-MNIST dataset with fixed $k=10$ and varied $\alpha$.}
\label{tab:fashion_mnist_varied_alpha_full_means}
\footnotesize
\setlength{\tabcolsep}{1.5mm}
%
\end{table*}
\begin{table*}[htbp]
\centering
\caption{Performance comparison on MNIST dataset with fixed $k=10$ and varied $\alpha$.}
\label{tab:mnist_varied_alpha_means}
\footnotesize
\setlength{\tabcolsep}{1.5mm}
%
\end{table*}
\begin{table*}[htbp]
\centering
\caption{Performance comparison on PHY dataset with fixed $k=10$ and varied $\alpha$.}
\label{tab:phy_varied_alpha_means}
\footnotesize
\setlength{\tabcolsep}{1.5mm}
%
\end{table*}

\begin{table*}[htbp]
\centering
\caption{Performance comparison on CIFAR-10 dataset with fixed $\alpha=0.2$ and varied $k$.}
\label{tab:cifar10_varied_k_means}
\footnotesize
\setlength{\tabcolsep}{1.5mm}
%
\end{table*}
\begin{table*}[htbp]
\centering
\caption{Performance comparison on Fashion-MNIST dataset with fixed $\alpha=0.2$ and varied $k$.}
\label{tab:fashion_mnist_varied_k_new}
\footnotesize
\setlength{\tabcolsep}{1.5mm}
%
\end{table*}

\begin{table*}[htbp]
\centering
\caption{Performance comparison on PHY dataset with fixed $\alpha=0.2$ and varied $k$.}
\label{tab:phy_varied_k_means}
\footnotesize
\setlength{\tabcolsep}{1.5mm}
%
\end{table*}

\end{document}